\theoremstyle{plain} %default
\newtheorem{theorem}{Theorem}[section]
\newtheorem{proposition}[theorem]{Proposition}
\newtheorem{lemma}[theorem]{Lemma}
\newtheorem{corollary}[theorem]{Corollary}
\theoremstyle{definition}
\newtheorem{definition}[theorem]{Definition}
\newtheorem*{observation}{Observation}
\newtheorem{innercustomthm}{Example}
\newenvironment{customthm}[1]
{\innercustomthm}
{\endinnercustomthm}
\let\Oldenddefinition\enddefinition
\def\enddefinition{\hfill $\triangleleft$\Oldenddefinition}%
\begin{document}

\title[]{Irrational random utility models$^*$} 
\thanks{$\dagger$ WZB Berlin. E-mail: daniele.caliari@wzb.eu}
\thanks{$\ddagger$ University of Gothenburg. E-mail: henrik@petri.se}
\thanks{$^*$ We would like to thank the following for helpful comments and suggestions: Dilip Ravindran, Victor Aguiar,
Jose Apesteguia, Miguel Ballester, Christian Basteck, Valentino Dardanoni, Tilman Fries, Gilat Levy, Paola Manzini, Marco Mariotti, Yusufcan Masatlioglu, Ran Spiegler, Ran Shorrer, J\"{o}rg Stoye, and Ferdinand Vieider; the participants of the 10th BRIC conference 2024, and internal seminars at the WZB and Humboldt University.}

\author[]{Daniele Caliari$^{\dagger}$  \hspace{0.2em} \& \hspace{0.05em} Henrik Petri$^{\ddagger}$ }

\maketitle

\begin{abstract}

The Random Utility Model (RUM) is the leading model to represent the aggregate choices of a heterogeneous population of preference maximizers. We show that if (and only if) preferences are sufficiently uncorrelated, RUM choices can also be generated by a population of decision makers who do not maximize any preference. In proving this result, we also characterize the general class of choices generated by such irrational populations, with applications beyond the RUM framework. We discuss the relevance of our results for the falsifiability of the rational interpretation of RUMs, the inference of individual rationality from aggregate choices, and the nature of welfare judgments.

\end{abstract}
\hspace*{3,6mm}{\small \textbf{Keywords}: Stochastic choice, random utility models, rationality.} %Keywords

\hspace*{3,6mm}{\small \textbf{JEL Classification}: D00, D90, D91.}

\newpage

\section{Introduction}

The random utility model (RUM) is the most renowned model of stochastic choice and it is the leading notion of stochastic rationality in economics (\cite{mcfadden1990stochastic}, \cite{mcfadden2006revealed}): choice probabilities satisfy the RUM hypothesis if they result from the aggregation of the choices of rational - preference maximizing - decision-makers. These aggregate choices are normally referred to as \emph{stochastically rational}.

A starting point of the present paper is the observation that aggregate choices may be stochastically rational even if all decision-makers are irrational, i.e. they violate the tenets of rationality.\footnote{In this paper, we assume that a “rational” decision-maker is one who maximizes a strict linear order.  This notion is equivalent to Sen's property $\alpha$ \citep{sen71} or the weak axiom of revealed preference (WARP) \citep{arrow1959rational}. We refer to section \ref{sec: preliminaries} for precise definitions.}
 The intuition behind this observation is simple. If the irrationalities of decision-makers are sufficiently uncorrelated, they will cancel out, and aggregate data will appear \emph{as if} stochastically rational.\footnote{This intuition is not new. \cite{becker} famously pointed to it while \cite{grandmont1992transformations} noticed that "Wald, Hicks, Arrow, Hahn, and quite a few others" conjectured that enough "heterogeneity" should yield a nicely behaved aggregate demand.} We call the subset of RUMs that display this property "\emph{irrational}" [I-RUM].

To capture this intuition, we introduce novel bounds on preference heterogeneity that we call \emph{correlation bounds}. The use of the term "correlation" comprises both their general role in governing the relationship between individual and aggregate choices (joint and marginal distributions), and the specific application in our paper as upper bounds on the correlation between the preferences of the rational decision-makers.\footnote{Our correlation bounds are a strengthening of the lower Fr\'echet inequalities, which are bounds on the conjunction of events; in our case, individual choices from different menus of alternatives.} In a nutshell, a population of rational decision-makers is, in aggregate, observationally equivalent to a population of irrational ones if, and only if, their preferences are sufficiently uncorrelated.

The paper begins with the characterization of irrational representations, which, in seemingly sharp contrast to RUMs, we call \textit{random non-utility models} (RNUMs). This result is interesting per s\'e as it opens the door to irrational representation theorems in which aggregate choices can be rationalized by some well-known class of (behavioral) choice functions. Then, in our main theorem, we overlap rational (RUM) and irrational (RNUM) representations to characterize the set of I-RUMs using the correlation bounds. We illustrate the intuition behind our main result in one simple example.

\begin{innercustomthm} \label{ex1.1}

An analyst observes the aggregate choice probabilities in all non-empty menus of a set of three alternatives - "chicken" ($c$), "steak" ($s$), and "frogs' legs" ($f$) - from a population where 1/2 of the decision-makers have preference $c \hspace{0.15em} P_1 \hspace{0.15em} s \hspace{0.15em} P_1 \hspace{0.15em} f$ and 1/2 have preference $s \hspace{0.15em} P_2 \hspace{0.15em} c \hspace{0.15em} P_2 \hspace{0.15em} f$. The first and second column in the table below displays the choices of these rational decision-makers from each menu with $P_1$'s choices highlighted in \textcolor{red}{red} and $P_2$'s in \textcolor{blue}{blue}. 

\begin{table}[H]
\centering 
\begin{tabular}{c | c c | c c} 
\hline\hline 
 Menu & $\textcolor{black}{c_{P_1}(A)}$ & $c_{P_2}(A)$ & $c_1(A)$  & $c_2(A)$  \\ [0.1ex] 
 \hline 
$csf$ & $\textcolor{red}{c}$ & $\textcolor{blue}{s}$ & $\textcolor{blue}{s}$ & $\textcolor{red}{c}$\\  % [1ex] adds vertical space 
$cs$ & $\textcolor{red}{c}$  & $\textcolor{blue}{s}$ & $\textcolor{red}{c}$  & $\textcolor{blue}{s}$ \\ 
$cf$ & $\textcolor{red}{c}$ & $\textcolor{blue}{c}$ & $\textcolor{red}{c}$   & $\textcolor{blue}{c}$  \\ 
$sf$ & $\textcolor{red}{s}$ & $\textcolor{blue}{s}$ & $\textcolor{red}{s}$  & $\textcolor{blue}{s}$  \\ 
\hline\hline 
\end{tabular}
\label{table: portfolio} 
\end{table} 

We now provide an interpretation of the aggregate data from an irrational perspective. Consider the Luce \& Raiffa's dinner example \citep{luce1957introduction} in which a decision-maker chooses chicken when only steak is available, but switches to steak whenever also frogs' legs are available even if he dislikes frogs' legs. Imagine now another decision-maker with the same behavior but who chooses steak when only chicken is available and switches to chicken in the presence of frogs' legs. These well-known irrational behaviors are summarized by the choices $c_1$ and $c_2$ in the table. A population in which 1/2 of the decision-makers choose according to $c_1$ and $c_2$ will induce the same aggregate data as the original population of rational decision-makers.
\end{innercustomthm}

Example \ref{ex1.1} clarifies the conditions for the existence of irrational representations, as if the probability distribution on the preferences $P_1$ and $P_2$ is perturbed only slightly, an irrational representation no longer exists. To see this, imagine an analyst who observes the aggregate choices of a (rational) population where 51\% of decision-makers have preference $P_1$ ($c \hspace{0.15em} P_1 \hspace{0.15em} s \hspace{0.15em} P_1 \hspace{0.15em} f$) and 49\% have preference $P_2$ ($s \hspace{0.15em} P_2 \hspace{0.15em} c \hspace{0.15em} P_2 \hspace{0.15em} f$). Knowing only the aggregate choices, she is oblivious to the individual rationality of the decision-makers, and so she searches for an irrational interpretation similar to the one in Example \ref{ex1.1}. We next argue that an irrational representation does not exist. Indeed, at least $2\%$ of the decision-makers \emph{must be} rational with preference $P_1$. Note first that the only choice functions consistent with aggregate choices are those in the preceding table (i.e. $c_{P_1},c_{P_2},c_1,c_2$). This is because $f$ is never chosen in any menu.  Suppose now that \emph{less} than $2\%$ of the decision-makers are rational with preference $P_1$. Since $51\%$ of the decision-makers choose $c$ in $csf$, there must be more than $51-2=49\%$ irrational decision-makers who choose $c$ in $csf$. But, any irrational decision-maker who chooses $c$ in $csf$ must choose $s$ in $cs$ (to violate rationality), i.e. use choice function $c_1$. Hence, more than $49\%$ of decision-makers must choose $s$ in $cs$, which is incompatible with an aggregate share of $49\%$ choosing $s$ in $cs$.

By considering irrational decision-makers, our main result shows that the common interpretation of stochastically rational data is sometimes undermined. This observation questions the falsifiability of the rational interpretation of the RUM hypothesis, i.e. even if all decision-makers violate rationality the RUM hypothesis may fail to be rejected,\footnote{The problem of falsifiability of the rationality hypothesis was noticed, within a theory of demand framework, by \cite{blundell2003nonparametric}: "revealed preference tests are unlikely to reject the integrability conditions [\emph{aggregate rationality}] for such data but it is not clear that we would wish to characterize them as the outcome of a 'rational' procedure."} a problem we discuss in the second part of the paper. First, we introduce a class of representations in which only a fraction $\alpha$ of decision-makers are irrational ($\alpha$-RNUMs). Leveraging this result, we show that almost all RUMs can be represented by a population where a strictly positive fraction $\alpha$ of decision-makers are irrational, and in fact, that all RUMs with full support (e.g. Logit model) have this property.\footnote{A RUM with full support has strictly positive probability mass on every preference.} We then generalize the non-falsifiability problem and show that, under the RUM hypothesis, there is (almost) always a fraction of decision-makers whose behavior is unconstrained, i.e. these decision-makers' rationality can be violated in the most extreme sense. This result is deliberately demanding from an irrationality perspective, implying that such a fraction of irrational decision-makers may be small. We, therefore, go on to show that the fraction of irrational decision-makers that are compatible with the RUM hypothesis is larger the less constrained the set of possible irrational behaviors is. These results have important consequences for empirical applications where irrational behavior is unlikely to be extreme (i.e., more constrained), implying that the non-falsifiability issue is likely to be widespread.

Finally, we conclude by discussing the inference of individual rationality from aggregate data. This issue is prominent in the literature \citep{blundell2003nonparametric} because welfare judgments are founded on the rational interpretation of the RUM hypothesis. In Example \ref{ex1.1}, for instance, if the analyst prior knowledge on individual rationality is sufficient, the aggregate choices would imply that 1/2 of the decision-makers have preference $c \hspace{0.15em} P_1 \hspace{0.15em} s \hspace{0.15em} P_1 \hspace{0.15em} f$ and 1/2 have preference $s \hspace{0.15em} P_2 \hspace{0.15em} c \hspace{0.15em} P_2 \hspace{0.15em} f$. On the other hand, insufficient information about individual rationality may lead the analyst to interpret the data from an irrational perspective concluding that the decision-makers have menu-dependent preferences for which welfare judgments are uncertain. We follow the footsteps of \cite{hoderlein2014revealed} to provide lower bounds for the fraction of rational decision-makers.\footnote{We refer the reader to \cite{hoderlein2014revealed} and \cite{hoderlein2015testing} for a more detailed discussion on lower/upper bounds of individual rationality when only aggregate data are observed.} This derives directly from the characterization of $\alpha-$RNUMs and nicely contrasts classical characterization results of RUMs (\cite{falmagne1978representation}, \cite{barbera1986falmagne}) that focus on the upper bound, where all decision-makers are rational. Our final discussion aims to show that individual rationality is conceptually unrelated to the RUM hypothesis - a point that we illustrate with some provoking examples in both an abstract and an applied setting \citep{kitamura2018nonparametric}.

\subsection{Related Literature}

Our study of irrational behavior within the RUM contributes to a longstanding literature on stochastic choice. Most closely, our framework aligns with recent papers that also study probability distributions on choice functions and their aggregation. \cite{DMMT} and \cite{dardanoni2023mixture} refer to this approach as "mixture choice functions" while \cite{filiz2023progressive} as "random choice models". In this paper, we will adopt this second nomenclature. The focus of these papers is, however, different from ours as they tackle the well-known identification issues of stochastic choice models. \cite{DMMT} provide conditions under which an analyst who only observes aggregate choice probabilities can identify cognitive parameters under specific models and restrictive assumptions (preference homogeneity or known distribution of preferences). \cite{dardanoni2023mixture} refine these results by assuming that the analyst observes each decision-maker's choices and show that this allows the identification of both cognitive parameters and preferences. \cite{filiz2023progressive}, instead, study the aggregation of a population of decision-makers who are possibly boundedly rational under the assumption that the collection of their choices is progressive - a generalization to the well-known single-crossing property \citep{apesteguia2017single}. The authors show that this restriction on the set of choice functions allows the identification of the unique sources of heterogeneity in the data.\footnote{See also \cite{petri23} for a discussion of these issues (in a setting of multivalued choice).} In section \ref{sec: extension}, we discuss some connections between our results and the models of \cite{apesteguia2017single} and \cite{filiz2023progressive}. In a recent contribution, \cite{chambers2025ordered} explore the relation between identification properties of random choice models and the Fr\'echet bounds. This paper nicely complements our analysis for two reasons: (i) its focus on identification instead of falsifiability; and (ii) on rational (or minimally irrational) decision-makers. 

We also contribute to the longstanding literature on the characterizations of the RUM (\cite{block}, \cite{falmagne1978representation}, \cite{barbera1986falmagne}, \cite{mcfadden1990stochastic}, \cite{gilboa1990necessary}, \cite{fiorini2004short}, \cite{mcfadden2006revealed}, \cite{stoye2019revealed}), and its special cases such as single-crossing RUM \citep{apesteguia2017single}, and dual RUM \citep{mariottidual}; as well as the study of its non-identifiability issues (\cite{turansick}, \cite{suleymanov2024branching}). More specifically, our work is closely related to \cite{mariottidual} as dual RUMs will be crucial in our characterization of I-RUMs, and to \cite{turansick}, \cite{suleymanov2024branching} as by considering irrational decision-makers our results extend further the non-identifiability issues of the RUM.

Identifiability and falsifiability under the RUM hypothesis are tightly connected issues as they both leverage the tension between aggregate and individual behavior, an old problem within the literature on the theory of demand. \cite{becker} firstly recognized that the aggregation of erratic consumers may lead to a well-behaved aggregate demand function. The connection between heterogeneity and aggregate behavior was then investigated more in-depth by \cite{grandmont1992transformations} who, without relying on individual rationality, provided sufficient conditions for the distribution of individual demand functions that guarantee a well-behaved aggregation. These issues have been long recognized within the literature of demand estimation that often incorporates tests of individual rationality, e.g. see \cite{blundell2003nonparametric}. In this tradition, two relevant papers for us are \cite{hoderlein2014revealed}, and \cite{hoderlein2015testing}. The authors ask broadly what knowledge of the "joint distribution of demand", i.e. individual demand functions, the analyst may have by knowing only the marginal distribution of demand, i.e. aggregate choice probabilities. Their setting is therefore similar to ours and, to the best of our knowledge, they are the first to introduce Fr\'echet bounds to bound the fraction of (ir)rational consumers, i.e. they prove, within the context of the theory of demand, the necessity of these bounds. Our main result abstracts from the framework of the theory of demand to provide necessary and sufficient conditions for stochastic rationality to coexist with individual irrationality. We believe our abstraction could lead to the application of these intuitions to broader contexts where different definitions of (ir)rationality may apply.\footnote{For example, \cite{cattaneo2020random} introduce Random Attention Models (RAMs) and show that they can be characterized as RCMs with support on the (limited attention) deterministic choice functions characterized in \cite{masa2012}. A potential avenue of research is whether stochastic choice functions that are RAMs can be represented as RCMs with support outside the choice functions with limited attention, or equivalently, if there is an interpretation other than attention to the aggregate choices. In this sense, recently \cite{kashaev2022random} generalized RAMs to allow for preference heterogeneity, which we show to be crucial for the existence of irrational representations. More generally, this problem can be phrased as: "Under which conditions can one represent the aggregate choices from a model with the negation of the model?" We thank Ran Spiegler for this general intuition behind our result.} 

The view of our results from the lens of the falsifiability of the rational interpretation of the RUM hypothesis is, instead, relatively novel. \cite{im2022non} focus on a restrictive setting of two consumption goods and two observations to show that unless more than half of the population is irrational, the population could be stochastically rationalizable overall. They refer to this notion as "false acceptance of stochastic rationalizability". In this paper, we generalize their observation by characterizing the exact conditions under which stochastic rationalizable choices have an irrational representation. In our discussion section, we borrow some of their observations to show, within the framework of the theory of demand, that considering only rational representations of RUMs may lead to welfare misjudgments. 

Finally, our results contribute to the literature on tests of the RUM hypothesis: \cite{kitamura2018nonparametric}, \cite{mccausland2020testing}, \cite{aguiar2021stochastic}, \cite{smeulders2021nonparametric}, \cite{deb2023revealed}. More specifically, we discuss our results in the framework of \cite{kitamura2018nonparametric}. The authors provide a statistical test for the RUM hypothesis based on the distance between the observed choice probabilities and the set of RUMs which implies that whenever the distance is zero, i.e. the choice probabilities are stochastically rational, the RUM hypothesis is accepted. This observation is a direct consequence of the restriction of the set of choice functions to the rational ones. In weakening this assumption, our results show that the sole focus on aggregate choice probabilities often does not provide any evidence about individual behavior and that if evidence of individual rationality exists they often do not depend on whether aggregate choice probabilities are rational or not.

\section{Preliminaries}\label{sec: preliminaries}

We denote by $X$ a finite set of alternatives with $\vert X \vert = N$.  A subset $A \subseteq X$ is called a \emph{menu}.\footnote{With a little abuse of notation, we use the multiplicative notation $ab$, $abc$ for menus $\{a,b\}, \{a,b,c\}$.}  Let $\mathcal{A}$ denote the collection of all nonempty menus of $X$ with cardinality greater than two, and let $| \mathcal{A} | = K$. 

The empirical primitive is a \emph{stochastic choice function}, i.e. a map $\rho: X \times \mathcal{A} \to [0,1]$ such that i) $\sum_{a \in A} \rho(a, A) = 1$ for all $A \in \mathcal{A}$ and ii) $\rho(a, A)=0$ for all $a \in X \setminus A$. 
 
Our decision-makers choose once from several menus. Their choices are described by a \emph{choice function}, i.e. a map $c: \mathcal{A} \to X$ such that $c(A) \in A$ for all $A \in \mathcal{A}$. We are particularly interested in choices that result from the act of preference maximization. Let $P$ be a \emph{strict linear order}\footnote{A binary relation $P$ is a strict linear order, if it is \emph{asymmetric} (if $aPb$ then not $bPa$), \emph{transitive} ($aPb$ and $bPc$ implies $aPc$) and \emph{complete} ($a \neq b$ implies $a P b$ or $b P a$).}, henceforth "preference", we denote by $c_P$ a choice function that is rationalizable by a preference and refer to it as \emph{rational}, i.e. $c_P(A) = \max(P, A) = \{a \in A: aPb \,\, \text{for all} \,\, b \in A \}$ for all $A \in \mathcal{A}$. A choice function $c$ satisfies \emph{property} $\alpha$  if $a \in B \subseteq A \subseteq X$ and $a \in c(A)$ then $a \in c(B)$. \cite{sen71} shows that a choice function is rationalizable if and only if it satisfies property $\alpha$. We will use this equivalence throughout the paper. If a choice function is not rationalizable we will refer to it as \emph{irrational}. Thus, a choice function is irrational if and only if it violates property $\alpha$. We let $\mathcal{C}$ denote the collection of all choice functions and $\mathcal{P}$ the set of rationalizable choice functions. Note that there is a one-to-one correspondence between the set of rational choice functions and the set of preferences, therefore, throughout the paper we refer to them as $c_P$ or $P$ interchangeably. Finally, and most importantly, we do not assume that the analyst observes the individual choice functions even if, especially in experimental settings, this is often the case.

\subsection{Representations}

A \emph{random choice model (RCM)} is a probability distribution $\mu$ on $\mathcal{C}$.\footnote{We borrow the term "Random Choice Model" (RCM) from \cite{filiz2023progressive}.} The support of $\mu$, denoted $\text{supp}(\mu)$, is the set of choice functions with strictly positive probability, i.e. $\{c \in \mathcal{C}: \mu(c) > 0\}$. The RCM stochastic choice function is $\rho_{\mu}(a,A)=\mu(\mathcal{C}(a,A))$ for all $a \in A$ and $A \subseteq X$, where $\mathcal{C}(a,A) = \{c \in \mathcal{C}: a = c(A)\}$. 

\begin{definition}
\leavevmode
\begin{enumerate}
    \item A \emph{random utility model} (RUM) is an RCM with support in $\mathcal{P}$.
    \item A \emph{random non-utility model} (RNUM) is an RCM with support in $\mathcal{C} \setminus \mathcal{P}$.
\end{enumerate}
\end{definition}

If $\rho$ is a stochastic choice function such that $\rho=\rho_{\mu}$ for some RUM (resp. RNUM) $\mu$, we will (with some abuse of terminology) call $\rho$ itself a RUM (resp. RNUM). Finally, we say that an RCM is a \emph{dual RCM} if the cardinality of its support is less than or equal to two (i.e. if $|\text{supp}(\mu)| \leq 2$). An important subset of dual RCMs is that of \emph{dual RUMs}, where the support is on rational choice functions. Finally, we say that a RUM has full-support if $\text{supp}(\mu) = \mathcal{P}$.\footnote{Note that the standard definition of RUM as a probability distribution on preferences \citep{block} is here substituted by a probability distribution on rational choice functions (\cite{mcfadden1990stochastic}, \cite{kitamura2018nonparametric}, \cite{stoye2019revealed}). Given the one-to-one correspondence between the two sets the two definitions are equivalent from the perspective of the aggregate choice probabilities. However, they are conceptually different. The requirement $\text{supp}(\mu) \subseteq \mathcal{P}$ implies that each decision-maker is rational throughout all choices, while as shown in Example \ref{ex1.1}, the same $\rho$ may also be represented by irrational decision-makers.}

\section{Random Non-Utility Models}

As mentioned in the introduction, we first provide a characterization of RNUMs. Only later, in section \ref{sec: main}, we apply this result to study irrational representations of RUMs. This is for two reasons. First, the general case of RNUMs is interesting per s\'e as it provides a framework to analyze (behavioral) irrational representations from a general viewpoint. Second, although the latter result is a corollary (being the intersection of RUMs and RNUMs), there are some important differences between the two results that we highlight in section \ref{sec: main}.

A key observation in our characterization of RNUMs, is that the relationship between $\rho$ and $\mu$ is analogous to the one between joint and marginal distributions and, as a result, it is governed by the Fr\'echet bounds. In a nutshell, these bounds determine the set of joint distributions compatible with given marginals (see Appendix \ref{app: frechet} for a discussion on the Fr\'echet bounds and their application to our framework). We will apply a variation of the Fr\'echet bounds that we call "correlation bounds". This is to highlight the role of the correlation between rational and irrational choice functions that will be soon apparent (see lemma \ref{lemma: CRNUM}). In Appendix \ref{app: frechet}, we also motivate the use of the word "correlation" with a more mathematical intuition. We develop the intuition behind the correlation bounds within our opening example.

\begin{customthm}{1}\label{ex1.1cc}
We would like to understand when an SCF $\rho$ has an RNUM representation. Consider the following aggregate data from our opening example where $0.5 + \frac{x}{2}$ have preference $c P_1 s P_1 f$ and $0.5 - \frac{x}{2}$ have preference $s P_2 c P_2 f$. 

\begin{table}[H]
\centering 
\begin{tabular}{c | c c c } 
\hline\hline 
 Menu & $\textcolor{black}{\rho(c,\cdot)}$ & $\rho(s,\cdot)$ & $\rho(f,\cdot)$ \\ [0.1ex] 
 \hline 
$csf$ & $0.5 + \frac{x}{2}$ & $0.5 - \frac{x}{2}$ & $0$ \\  % [1ex] adds vertical space 
$cs$ & $0.5 +  \frac{x}{2}$  & $0.5 -  \frac{x}{2}$ & $0$ \\ 
$cf$ & $1$ & $0$ & $0$  \\ 
$sf$ & $0$ & $1$ & $0$ \\ 
\hline\hline 
\end{tabular}
\end{table}

First, note that any RCM representation $\mu$ of this data must be such that the fraction of rational decision-makers is at least $x$, more specifically, with preference $cP_1sP_1f$. To see this, note that a fraction $\rho(c,cs)$ choose $c$ in $cs$ and a fraction $\rho(c,csf)$ choose $c$  in $csf$. Since $\rho(c,cs) + \rho(c,csf)= 0.5+\frac{x}{2} + 0.5 +\frac{x}{2} = 1 + x$, this means that there is an overlap $x$ of the decision-makers that choose $c$ in $cs$ and those that choose $c$ in $csf$.\footnote{We here apply the simplest case of the (lower) Fréchet bounds, which for any two events $A,B$ say that the probability $P(A \cap B)$  is (weakly) larger than $P(A)+ P(B) -1 $. Intuitively, if $P(A) + P(B)$ is larger than $1$ then there will be (at least) a $P(A) + P(B) -1$ "overlap" between the events $A$ and $B$.}  I.e. at least a fraction $x$ choose $c$ in $cs$ \emph{and} $c$ in $csf$. But, since everyone chooses $c$ in $cf$ and $s$ in $sf$, this in turn implies that at least a fraction $x$ has preference $P_1$. Thus, if $x \neq 0$, the data is incompatible with an RNUM representation.

\end{customthm}

In the remainder of the section, we show that this reasoning can be generalized to every stochastic choice function and characterize the set of RNUMs. But first, we introduce a few more pieces of notation to simplify the exposition. Denote by $x_{P}(N-1), x_{P}(N)$ the worst two alternatives according to a preference $P$, and for each preference $P$ let $\mathcal{A}(P)=\{A \subseteq X: |A| \geq 2\} \setminus \{\{x_{P}(N-1),x_{P}(N)\}\}$, and note that $| \mathcal{A}(P) | = K -1$.\footnote{The exclusion of the worst two alternatives can be explained from the perspective of rationality, i.e. the weak axiom of revealed preference. These alternatives are only chosen within $\{x_{P}(N-1),x_{P}(N)\}$ and therefore will never induce violations of rationality.} We next define for each preference $P$:  

$$\mathbb{C}^{\rho}_{P} = \sum_{A \in \mathcal{A}(P)} \rho(c_{P}(A), A)- [K-2].$$

The quantity $\mathbb{C}^{\rho}_P$ represents the probability mass in $\rho$ that "correlates" with a rational choice function $c_P$. In other words, it measures positively the correlation between the choices of the decision-makers in the population w.r.t. a specific preference. Given our opening example, the correlation bounds come as upper bounds on the correlation between the choices of the decision-makers.

\begin{definition}[Correlation Bounds] 
A stochastic choice function $\rho$ satisfies the \emph{Correlation Bounds} if for all preferences $P$ it holds that  
$$\mathbb{C}^{\rho}_{P} \leq 0.$$\end{definition}

Importantly, the bounds can be rewritten as a function of the representation, instead of the stochastic choice function, to highlight the correlation between rational and irrational choice functions.

\begin{definition}[Measure of correlation between rational/ irrational choice functions] Fix a preference $P$, then for all choice functions $c$:
$$n(P,c) = |\{A \in \mathcal{A}(P): c_P(A) = c(A)\}|.$$    
\end{definition}
The following lemma qualifies our use of the word "correlation".
\begin{lemma}[$\mathbb{C}^{\rho}_{P}$ for any RCM] \label{lemma: CRNUM}
Let $\rho$ be a stochastic choice function and let $\mu$ be an RCM s.t. $\rho=\rho_{\mu}$. For all preferences $P$ it holds that
$$\mathbb{C}^{\rho}_{P} + [K-2] = \sum\limits_{c \in \mathcal{C}} \mu(c) n(P,c).$$
\end{lemma}
\begin{proof} Let $\mu$ be s.t. $\rho=\rho_{\mu}$ and let $P$ be a preference. The proof follows by noting that $$ \sum_{A \in \mathcal{A}(P)} \rho(c_P(A),A) = \sum_{A \in \mathcal{A}(P) }\sum_{c \in \mathcal{C}}[\mu(c)\bold{1}\{c_P(A) = c(A) \} ]=$$
$$\sum_{c \in \mathcal{C}} \mu(c) \sum_{A \in \mathcal{A}(P)} \bold{1}\{c_P(A) = c(A) \} = $$
$$  \sum_{c \in \mathcal{C}} \mu(c)n(P, c).$$
The first equality follows since $\rho=\rho_{\mu}$, the second equality follows by changing the order of summation, and the final one follows by definition of $n(P, c)$. \end{proof}

We are now ready to state our first main result.

\begin{theorem} \label{thm: RNUM}
A stochastic choice function $\rho$ is an RNUM if and only if it satisfies the correlation bounds.
\end{theorem}

\noindent \emph{Proof (sketch)}. We delegate the full proof to Appendix \ref{app: main}. The necessity of the correlation bounds follows since they are defined as variations of (lower) Fr\'echet bounds.\footnote{Please see Appendix \ref{app: frechet} for a discussion of the Fréchet bounds and Appendix \ref{sec: nproof} for a proof of necessity.} We next provide a sketch of sufficiency. First, note that the set of RCMs that satisfy the correlation bounds is convex. By Caratheodory's theorem, we know that any point in a convex set can be written as a convex combination of its extreme points; therefore, if we can show that each extreme point of this set has an RNUM representation, the proof is complete. To show this, we proceed through three lemmas. The first lemma shows that every dual RCM that satisfies the correlation bounds with equality has an RNUM representation, while the final two lemmas show that each extreme point of the set of RNUMs is a dual RCM that satisfies the correlation bounds with equality.

\section{Irrational random utility models} \label{sec: main}

Recall, a stochastic choice function is a RUM if a probability distribution on a set of rational choice functions can describe it. However, even if aggregate choices are stochastically rational, they may hide a population of completely irrational decision-makers, questioning the rational foundations of the RUM. This motivates us to apply Theorem \ref{thm: RNUM} to study RUMs that result from the choices of a population of irrational decision-makers.

\begin{definition}A stochastic choice function $\rho$ is an \emph{Irrational RUM (I-RUM)} if it is both a RUM and an RNUM.\end{definition}

Applying Theorem \ref{thm: RNUM} to RUMs provides a novel viewpoint on the relation between heterogeneous preferences and irrational choice functions. Specifically, the characterization of I-RUMs is based on the intuition that only a RUM with a sufficiently spread-out probability distribution can be an I-RUM. We develop this intuition within our opening example. 

\begin{customthm}{1}[continued] \label{ex1.1cc}
In Example \ref{ex1.1}, we have shown that a RUM $\mu$ with support on $P_1$ and $P_2$ has an irrational representation (RNUM) only if $\mu(P_1) = \mu(P_2) = \frac12$. To formalize this result using the correlation bounds, note that an irrational decision-maker who chooses alternative $c$ from the grand set $csf$ must either choose $s$ from $cs$ or $f$ from $cf$ (to violate Sen's property $\alpha$). As a result, an I-RUM exists only if $\rho(s, cs) + \rho(f, cf) \geq \rho(c, csf)$. Since $f$ is never chosen, this observation is sufficient to show that a RUM with support $P_1$ and $P_2$ is an I-RUM only if $\mu(P_2)= \rho(s,cs)= \rho(s, cs) + \rho(f, cf)  \geq \rho(c,csf) = \mu(P_1)$. Symmetric reasoning implies that $\mu(P_1) = \mu(P_2) = 1/2$.

\begin{table}[H]
\centering 
\begin{tabular}{c | c c | c } 
\hline\hline 
 Menu & $\textcolor{black}{\max(A,P_1)}$ & $\max(A,P_2)$ & $\max(A,P_3)$ \\ [0.1ex] 
 \hline 
$csf$ & $c$ & $s$ & $f$ \\  % [1ex] adds vertical space 
$cs$ & $c$  & $s$ & $s$ \\ 
$cf$ & $c$ & $c$ & $f$  \\ 
$sf$ & $s$ & $s$ & $f$ \\ 
\hline\hline 
\end{tabular}
\end{table} 

To better understand the mechanism underlying our characterization of I-RUMs, we next look at RUMs with support on preferences $P_1$ and $P_3$. Applying the same reasoning as above, one can see that $\mu$ is an I-RUM only if $\mu(P_3)+ \mu(P_3) = \rho(s, cs) + \rho(f, cf)  \geq \rho(c,csf) = \mu(P_1)$ implying that $\mu(P_1) \leq 2/3$. And, similarly that $\mu(P_1) + \mu(P_1) =\rho(s, fs) + \rho(c, cf)  \geq \rho(f,csf) = \mu(P_3)$ implying that $\mu(P_1) \geq 1/3$.  We thus see that $\mu$ is an I-RUM only if $\mu(P_1) \in [1/3,2/3]$ (or equivalently $\mu(P_3) \in [1/3,2/3]$). It is also straightforward to check that all RUMs with $\mu(P_1)  \in [1/3,2/3]$ are I-RUMs.

Note that the preferences $P_1$ and $P_3$ are less "correlated" than $P_1$ and $P_2$ in the sense that $P_1$ and $P_2$ make the same choices in two menus, whereas $P_1$ and $P_3$ only make the same choices in one menu. This observation supports the idea that a less correlated set of preferences implies that a higher proportion of RUMs with support on this set have an irrational representation.
\end{customthm}

The correlation bounds capture the intuition of example \ref{ex1.1cc} when restricted to the set of RUMs. By rewriting the bounds in the space of the preferences, one can see that uncorrelated preferences and uncorrelated mistakes are observationally equivalent in the aggregate. For simplicity, we write $n(P,P')$ instead of $n(c_P, c_{P'})$ to refer to the measure of correlation between preferences.

\begin{lemma}[$\mathbb{C}^{\rho}_{P}$ within RUMs]  \label{lemma: Frum} Let $\rho$ be a RUM with distribution $\mu$ (i.e. $\rho=\rho_{\mu}$). Then for all preference $P$ it holds that $$\mathbb{C}^{\rho}_P = \sum_{P' \in \mathcal{P}} \mu(P')n(P,P') - [K-2].$$  \end{lemma} 

\begin{proof} The proof follows as a corollary of Lemma \ref{lemma: CRNUM}. \end{proof}

The following result follows as a corollary to Theorem \ref{thm: RNUM}. 

\begin{theorem} \label{thm: main}
Let $\rho$ be a RUM, then it is an I-RUM if and only if it satisfies the correlation bounds.
\end{theorem}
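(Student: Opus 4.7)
My plan is to prove the two directions separately: necessity reduces to a combinatorial estimate on irrational choice functions, while sufficiency requires constructing an irrational representation from a rational one and is the substantive part.

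For necessity, suppose $\rho$ is an I-RUM represented by some distribution $\nu$ supported in $\mathcal{C}\setminus\mathcal{P}$. The key combinatorial lemma I would establish is that for every irrational $c$ and every preference $P$,
\[
\big|\{A\in\mathcal{A}(P) : c(A) = c_P(A)\}\big| \leq K-2.
\]
If $c$ agreed with $c_P$ on all $K-1$ menus of $\mathcal{A}(P)$, then $c$ and $c_P$ could differ only on the pair $\{x_P(N-1), x_P(N)\}$; but either choice on this pair makes $c$ rational (it coincides either with $c_P$ itself, or with the rational function induced by the preference obtained from $P$ by swapping its bottom two alternatives), contradicting the irrationality of $c$. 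Exchanging summations,
\[
\sum_{A\in\mathcal{A}(P)}\rho(c_P(A),A)\;=\;\sum_{c}\nu(c)\,\big|\{A\in\mathcal{A}(P):c(A)=c_P(A)\}\big|\;\leq\;K-2,
\]
which is precisely $\mathbb{C}^\rho_P\leq 1$.

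For sufficiency, given a rational representation $\mu$ with $\mathbb{C}^\rho_P\leq 1$ for every $P$, I need to produce a distribution $\nu$ on $\mathcal{C}\setminus\mathcal{P}$ with the same marginals. My plan is to build on the ``swap'' mechanism of Example \ref{ex1.1}: for two preferences $P,P'$ differing on a menu $A$, exchanging their choices at $A$ yields two irrational choice functions (violating Sen's $\alpha$) whose average reproduces the aggregate choice at $A$. I would then decompose $\mu$ into a convex combination of two-point submeasures --- Dual RUM pieces --- verify via Lemma \ref{lemma: Frum} that the correlation bounds provide exactly the slack needed for each piece to be ``swap-irrationalisable,'' and aggregate. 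Dual RUMs are the natural unit here because, when $\mu$ is supported on $\{P,P'\}$, the bound $\mathbb{C}^\rho_P\leq 1$ simplifies by Lemma \ref{lemma: Frum} to an explicit inequality among $\mu(P),\mu(P')$ and $n(P,P')$ that can be translated directly into a feasible swap.

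The hard part is global coherence of these local swaps: a single swap between $P$ and $P'$ on a menu $A$ simultaneously moves mass across many of the correlation bounds, and independent swaps involving overlapping preferences may interfere. I see two natural routes. The first is LP duality: recast the existence of $\nu$ as feasibility of a linear program with marginals fixed to $\rho$ and support restricted to $\mathcal{C}\setminus\mathcal{P}$, then argue via Farkas that any separating hyperplane witnessing infeasibility must encode a violation $\mathbb{C}^\rho_P>1$ of some correlation bound, so the bounds are the only obstructions. The second is induction on $|\text{supp}(\mu)|$, using the Example \ref{ex1.1} swap as base case for Dual RUMs (where the bound becomes the equalities $\mu(P_1)=\mu(P_2)$ in the tight case) and peeling off Dual RUM components in such a way that the residual still satisfies the correlation bounds; the combinatorics of which pair to peel and on which menu to swap is the technical crux in this route. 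Either approach reduces the proof to showing that the correlation bounds exhaust the obstructions to irrational representability, which is precisely the content of the theorem.
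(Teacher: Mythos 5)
Your necessity argument is correct and is essentially the paper's, in more elementary form: the paper routes it through Fr\'echet bounds (Lemma \ref{lemma: strict}), observing that $\cap_{A \in \mathcal{A}(P)}\{c : c(A)=c_P(A)\}=\{c_P,c_Q\}$ with $Q$ the bottom-two swap of $P$, while you count directly that an irrational $c$ can agree with $c_P$ on at most $K-2$ menus of $\mathcal{A}(P)$; these are the same fact, and your exchange of summations is valid.

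The sufficiency direction, however, is an outline with the two hardest steps left open, and one of its concrete claims is off. First, the ``swap'' base case: for a Dual RUM on $\{P_1,P_2\}$ that is tight at $P_2$, the bound does \emph{not} reduce to $\mu(P_1)=\mu(P_2)$ in general; it gives $\mu(P_2)=k\,\mu(P_1)$ with $k=K-2-n(P_2,P_1)$, so equal weights occur only when $k=1$. A single swap on one menu (two irrational functions averaged) cannot reproduce unequal marginals; the paper's Lemma \ref{lemma: dual} instead builds $k+1$ irrational choice functions $c'_1,\dots,c'_{k+1}$, each agreeing with $c_{P_2}$ everywhere except on one of the $k+1$ disagreement menus, and puts uniform weight $\tfrac{1}{k+1}$ on them; verifying each $c'_i$ violates Sen's $\alpha$ and that the marginals match is where the tightness of the bound is actually used. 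Second, the global step: neither of your two routes is carried out, and the interference problem you flag is real. The paper avoids it entirely by a convexity argument rather than peeling or LP duality: the set of RUMs satisfying the bounds is convex; a perturbation argument (Lemma \ref{lemma: extreme}) shows any RUM satisfying the bounds strictly is a non-trivial convex combination of a tight Dual RUM and another RUM satisfying the bounds, so every extreme point is tight; a tight RUM is then explicitly decomposed into tight Dual RUMs with weights $\delta_i=\mu(P_i)[K-1-n(P_0,P_i)]$; Carath\'eodory finishes. Your Farkas route would additionally require characterizing the dual cone of the feasibility LP and showing every separating certificate is a correlation bound, which is not obvious and is not done. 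As it stands the sufficiency half is a plausible plan, not a proof.
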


The proof of Theorem \ref{thm: main} is a specialized version of that of Theorem \ref{thm: RNUM}. Nonetheless, it acquires some interesting features that are lost in the more general case. First, the proof of Theorem \ref{thm: main}, differently from Theorem \ref{thm: RNUM}, relies only on a small subset of RNUMs (1-step away choice functions, i.e. $n(P, c)=1$ for some preference $P$), in fact, so small that as the number $N$ of alternatives in $X$ increases, it gets vanishingly small compared to the set of all choice functions.\footnote{See \cite{kalai}, \cite{giarlotta2023context}) for a proof of this fact. More precisely, they show that the proportion of choice functions that can be rationalized by fewer than $N-1$ rationales tends to zero as $N$ tends to infinity. Since any 1-step away choice function can be rationalized by two rationales, the claim in the text follows.} In section \ref{sec: identification}, we will show how this feature creates an interesting avenue to discuss identification and comparative statics.  Second, the extreme points of the (convext) set of I-RUMs are now dual RUMs. Dual RUMs provide an interesting geometric visualization of the set of I-RUMs that we illustrate with a minor modification of our opening example.

\begin{customthm}{1}[continued]

Theorem \ref{thm: main} provides the visualization of the set of I-RUMs related to RUMs with support $P_1$, $P_2$, and $P_3$ as convex combinations of the respective irrational dual RUMs. In this simple example, the numbers $n(P_1,P_2) = 1$, and $n(P_1, P_3) = n(P_2, P_3) = 0$ contain all the information needed. To illustrate, the dual RUM with support $P_1, P_2$ has the following correlation bounds:
\[
\mathbb{C}_{P_{1}}^{\rho} = 3\mu(P_{1}) + \mu(P_{2}) - 2 \leq 0, \quad \mathbb{C}_{P_{2}}^{\rho} = \mu(P_{1}) + 3\mu(P_{2}) - 2 \leq 0,
\]
which imply $\mu(P_1) = \mu(P_2) = \frac12$. Instead, the dual RUM with support $P_1, P_3$ has the following correlation bounds: 
\[
\mathbb{C}_{P_{1}}^{\rho} = 3\mu(P_{1}) + 0\mu(P_{3}) - 2 \leq 0, \quad \mathbb{C}_{P_{3}}^{\rho} = 0\mu(P_{1}) + 3\mu(P_{3}) - 2 \leq 0
\]
which imply $\mu(P_1) \in [\frac13, \frac23]$.

\begin{minipage}{0.45\linewidth}
\centering
\begin{table}[H]
\centering 
\begin{tabular}{c | c c c } 
\hline\hline 
 Menu & $\textcolor{black}{\max(A,P_1)}$ & $\max(A,P_2)$ & $\max(A,P_3)$ \\ [0.1ex] 
 \hline 
$csf$ & $c$ & $s$ & $f$ \\  % [1ex] adds vertical space 
$cs$ & $c$  & $s$ & $s$ \\ 
$cf$ & $c$ & $c$ & $f$  \\ 
$sf$ & $s$ & $s$ & $f$ \\ 
\hline\hline 
\end{tabular}
\end{table} 
\end{minipage}
\hspace{0.1\linewidth}
\begin{minipage}{0.45\linewidth}
\centering
\begin{figure}[H] \label{fig: dualRUM}

\tikzset{every picture/.style={line width=0.75pt}} %set default line width to 0.75pt        

\begin{tikzpicture}[x=0.75pt,y=0.75pt,yscale=-1,xscale=1]
%uncomment if require: \path (0,299); %set diagram left start at 0, and has height of 299

%Shape: Triangle [id:dp8089322908953951] 
\draw   (314.38,77) -- (407.75,229.7) -- (221,229.7) -- cycle ;
%Shape: Polygon [id:ds2663542938292196] 
\draw  [color={rgb, 255:red, 208; green, 2; blue, 27 }  ,draw opacity=1 ][fill={rgb, 255:red, 208; green, 2; blue, 27 }  ,fill opacity=0.2 ][line width=1.5]  (285.75,124.7) -- (362.75,156.7) -- (347.75,229.7) -- (275.75,229.7) -- (246.75,186.7) -- cycle ;

% Text Node
\draw (280,168) node [anchor=north west][inner sep=0.75pt]  [font=\scriptsize] [align=left] {$\displaystyle I-RUMs$};
% Text Node
\draw (407,240) node [anchor=north west][inner sep=0.75pt]   [align=left] {$\displaystyle P_{2}$};
% Text Node
\draw (233.23,166.87) node [anchor=north west][inner sep=0.75pt]  [font=\scriptsize,rotate=-302.63] [align=left] {$\displaystyle dual\ RUMs$};
% Text Node
\draw (364.54,120.58) node [anchor=north west][inner sep=0.75pt]  [font=\scriptsize,rotate=-57.77] [align=left] {$\displaystyle dual\ RUMs$};
% Text Node
\draw (285,250) node [anchor=north west][inner sep=0.75pt]  [font=\scriptsize] [align=left] {$\displaystyle dual\ RUMs$};
% Text Node
\draw (308,50) node [anchor=north west][inner sep=0.75pt]   [align=left] {$\displaystyle P_{1}$};
% Text Node
\draw (208,241) node [anchor=north west][inner sep=0.75pt]   [align=left] {$\displaystyle P_{3}$};

\end{tikzpicture}
\end{figure}
\end{minipage}
\end{customthm}

We conclude the section by providing two corollaries that even more intuitively convey the idea that I-RUMs are RUMs with sufficiently uncorrelated preferences. The first (which follows directly from Theorem \ref{thm: main}) shows that if a preference $P$ in the support of a RUM has a "sufficiently high" probability mass then the RUM cannot be an I-RUM.

\begin{corollary} Let $\rho$ be a RUM with distribution $\mu$.  If $\mu(P) > \frac{K-2}{K-1}$ for some preference $P$ then $\rho$ is not an I-RUM. \end{corollary}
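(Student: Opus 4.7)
The plan is to prove the contrapositive via Theorem \ref{thm: main}: if $\mu(P) > \frac{K-2}{K-1}$, I would exhibit a preference at which the correlation bound is violated, so that $\rho$ cannot be an I-RUM. The natural witness is $P$ itself, because $P$ trivially agrees with itself on every menu in $\mathcal{A}(P)$, and so its contribution to $\mathbb{C}^{\rho}_P$ is as large as possible.

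Concretely, I would first note that $n(P,P) = |\mathcal{A}(P)| = K-1$ directly from the definition of $n(\cdot,\cdot)$, since $c_P(A) = c_P(A)$ holds for every $A \in \mathcal{A}(P)$. Applying Lemma \ref{lemma: Frum} and discarding the nonnegative contributions from preferences $P' \neq P$ yields the lower bound
$$\mathbb{C}^{\rho}_P \;=\; \frac{1}{K-2}\sum_{P' \in \mathcal{P}} \mu(P')\, n(P,P') \;\geq\; \frac{1}{K-2}\, \mu(P)\, n(P,P) \;=\; \frac{K-1}{K-2}\, \mu(P).$$
Substituting the hypothesis $\mu(P) > \frac{K-2}{K-1}$ into this lower bound gives $\mathbb{C}^{\rho}_P > 1$, so the correlation bound at $P$ fails; Theorem \ref{thm: main} then delivers the conclusion that $\rho$ is not an I-RUM.

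There is no genuine obstacle: the argument is a one-line computation once the right preference is used as the witness. The only conceptual step is recognizing that the "self-correlation" term $\mu(P)\cdot n(P,P)$ alone already saturates the correlation bound at the threshold $\mu(P) = \frac{K-2}{K-1}$, so that no information about the rest of the support of $\mu$ is needed for the corollary.
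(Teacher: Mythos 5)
Your argument is correct and is exactly the computation the paper has in mind when it says the corollary ``follows directly from Theorem \ref{thm: main}'': since $n(P,P)=|\mathcal{A}(P)|=K-1$, Lemma \ref{lemma: Frum} gives $\mathbb{C}^{\rho}_P \geq \frac{K-1}{K-2}\mu(P) > 1$, violating the correlation bounds, so the necessity direction of Theorem \ref{thm: main} applies. No issues.
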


The second, instead, shows that any RUM with sufficiently spread-out probability on the preferences is an I-RUM. 

\begin{corollary} Let $\rho$ be a RUM with distribution $\mu$. If $\mu(P) \leq \frac14$ for all $P \in \mathcal{P}$ then $\rho$ is an I-RUM. \end{corollary}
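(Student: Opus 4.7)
By Theorem~\ref{thm: main} it suffices to show that every RUM satisfying the hypothesis satisfies the correlation bounds, i.e.\ $\mathbb{C}^{\rho}_P \le 1$ for every preference $P$. Using Lemma~\ref{lemma: Frum}, this amounts to
\[
\sum_{P' \in \mathcal{P}} \mu(P') \, n(P,P') \le K-2.
\]
My plan is to split the sum into three pieces: the terms at $P$ itself, at its ``dual'' $\bar P$ (the preference obtained from $P$ by swapping the two worst alternatives $x_P(N-1)$ and $x_P(N)$), and at all remaining preferences. Note that $n(P,P) = K-1$ trivially, and $n(P,\bar P) = K-1$ because $P$ and $\bar P$ coincide on every pair except the excluded one $\{x_P(N-1),x_P(N)\}$, hence agree on every menu in $\mathcal{A}(P)$.

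The crux is the following \emph{gap lemma}: for every $P' \notin \{P,\bar P\}$, one has $n(P,P') \le K-3$. Here is the argument I would give. Since $P' \neq \bar P$, the orders $P$ and $P'$ must differ on some pair $\{a,b\} \neq \{x_P(N-1),x_P(N)\}$; say $aPb$ and $bP'a$. A short inspection rules out $a \in \{x_P(N-1),x_P(N)\}$: if $a = x_P(N)$ nothing is below $a$ in $P$, contradicting $aPb$; if $a = x_P(N-1)$ then $b = x_P(N)$, putting $\{a,b\}$ in the excluded pair. Consequently there exists $z \neq b$ with $aPz$. Both $\{a,b\}$ and $\{a,b,z\}$ then lie in $\mathcal{A}(P)$, and each witnesses a disagreement: $P$ picks $a$ from both, while $bP'a$ forces $c_{P'}(\{a,b\}) = b$ and prevents $a$ from being the $P'$-top of $\{a,b,z\}$. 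This gives at least two disagreements, so $n(P,P') \le K-1-2 = K-3$.

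Plugging the gap lemma into the decomposition yields
\[
\sum_{P'} \mu(P') \, n(P,P') \le (K-1)\bigl(\mu(P)+\mu(\bar P)\bigr) + (K-3)\bigl(1-\mu(P)-\mu(\bar P)\bigr) = (K-3) + 2\bigl(\mu(P)+\mu(\bar P)\bigr).
\]
The hypothesis $\mu(P) \le \tfrac{1}{4}$ for every $P$ gives $\mu(P)+\mu(\bar P) \le \tfrac{1}{2}$, so the right-hand side is at most $K-2$, as required.

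The main obstacle is the gap lemma: the arithmetic step is routine, but the naive bound $n(P,P') \le K-2$ is insufficient (it would yield only $\sum \le K-2 + (\mu(P)+\mu(\bar P))$, which can exceed $K-2$). What makes the $\tfrac{1}{4}$ threshold work is precisely the combinatorial fact that $\bar P$ is the \emph{only} ``near-twin'' of $P$ in the sense that $n(P,P') = K-1$, with all other preferences separated by a gap of at least two disagreements.
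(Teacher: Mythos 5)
Your proof is correct and follows essentially the same route as the paper's: decompose $\sum_{P'}\mu(P')n(P,P')$ into the contributions of $P$, its near-twin $\bar P$ (the paper's $Q$), and the rest, use $n(P,P)=n(P,\bar P)=K-1$ and $n(P,P')\le K-3$ otherwise, and close with $\mu(P)+\mu(\bar P)\le \tfrac12$. The only difference is that you supply an explicit (and correct) argument for the gap bound $n(P,P')\le K-3$, which the paper asserts without proof.
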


\begin{proof} We prove that $\rho$ satisfies the correlation bounds. Given $P$ let the preference $Q$ be the \textit{twin-preference} of $P$, namely the preference that agrees with $P$ on the ranking of all alternatives expected their worst two alternatives. We rewrite the correlation bounds as in Lemma \ref{lemma: Frum} and note that we have:  
 $$\mathbb{C}^{\rho}_P + [K-2] = 
\sum_{P' \in \mathcal{P}} \mu(P')n(P,P') = $$ $$\mu(P)(K-1) + \mu(Q)(K-1) + \sum_{P' \in \mathcal{P} \setminus \{P,Q\}} \mu(P')n(P,P') \leq  $$
 $$[\mu(P)+\mu(Q)](K-1) + \sum_{P' \in \mathcal{P} \setminus \{P,Q\}} \mu(P')(K-3) =$$ $$ [\mu(P)+\mu(Q)](K-1) + [1-\mu(P)-\mu(Q)](K-3) \leq K-2$$

where the first (in)equality follows by Lemma \ref{lemma: Frum}, the second (in)equality since $n(P, P)=n(P, Q)=K-1$, the third (in)equality follows since $n(P,{P'}) \leq K-3$ for all $P' \in \mathcal{P} \setminus \{P, Q\}$, the fourth in(equality) since $\sum_{P' \in \mathcal{P} \setminus \{P, Q\}} \mu(P')= 1- \mu(P) - \mu(Q)$ and the final inequality follows since $\mu(P)+\mu(Q) \leq \frac12$.\end{proof}

\subsection{Identification} \label{sec: identification}

Identification issues within RUMs are well-known and investigated \citep{barbera1986falmagne, fishburn1998, turansick, suleymanov2024branching}, and a natural question arises about whether irrational representations also suffer from identification issues. The answer seems obvious since there are many more irrational than rational choice functions. This intuition is generally correct; however, some stochastic choice functions that play an essential role in the proof of Theorem \ref{thm: RNUM} have a unique irrational representation.

\begin{proposition} \label{pro: uniqueness}
If $\rho$ satisfies the correlation bounds with equality, then it has a unique RNUM representation with support on the 1-step away choice functions.   \end{proposition}
\begin{proof}
See Appendix \ref{app: other}.
\end{proof}

Applying Proposition \ref{pro: uniqueness} to the set of RUMs leads to an interesting observation. If $\rho$ is an I-RUM and satisfies the correlation bounds with equality, the uniqueness of the irrational (RNUM) representation does not necessarily imply uniqueness of the rational (RUM) representation.\footnote{To show this, let $P_1$ and $P_2$ be preferences s.t. $a P_1 b P_1 c P_1 d$ and $b P_2 a P_2 d P_2 c$. Define a (dual) RUM $\mu$ by  $\mu(P_1) = 1/4$ and $\mu(P_2) = 3/4$. It is straightforward to verify that $\mu$ satisfies the correlation bounds with equality. Hence, by Proposition \ref{pro: uniqueness}, the RNUM representation is unique. Let $P'_1,P'_2$ be s.t. $a P'_1 b P'_1 d P'_1 c$ and $b P'_2 a P'_2 c P'_2 d$. It is then clear that $\rho$ can also be represented by $\mu'$ defined as $\mu'(P'_1)=\mu'(P'_2)=\frac14$ and $\mu'(P_2) = \frac12$, so the RUM representation is not unique.}   This observation holds even for the extreme points of the set of I-RUMs. As shown earlier, the extreme points of the set of I-RUMs are dual RUMs. \cite{mariottidual} shows that dual RUMs under weak conditions have a unique Dual RUM representation. However, they do not necessarily have a unique RUM representation. This implies that even if the extreme points of the set of I-RUMs are dual RUMs, i.e. the heterogeneity in preferences is very limited, they are identified only "within an irrational interpretation".

\section{Comparative statics on the fraction of irrational decision-makers: $\alpha$-Random Non-Utility Models}

So far, we have studied irrational representations either per s\'e (RNUMs) or paired with rational representations (I-RUMs). We now introduce and characterize a class of representations, $\alpha$-RNUMs, in which at least $\alpha$ decision-makers are irrational (RNUMs are an extreme case with $\alpha=1$). The subsequent results will allow us to state a comparative statics result on the maximal fraction $\alpha$ of irrational decision makers in an RCM representation of a stochastic choice function (see subsection \ref{sec: comp} for details). 

\begin{definition} 
An $\alpha$-RNUM is an RCM $\mu$ such that $\mu(\mathcal{C} / \mathcal{P}) \geq \alpha$.
\end{definition}

We characterize $\alpha$-RNUMs by relaxing the correlation bounds.

\begin{definition}[$\alpha$-Correlation Bounds]
Let $\alpha \in [0,1]$. A stochastic choice function $\rho$ satisfies the $\alpha$-\textit{Correlation Bounds} if for all preferences $P$ it holds that 
$$\mathbb{C}^{\rho}_{P} \leq 1 - \alpha$$
\end{definition}

\begin{proposition} \label{pro: aRUM}
A stochastic choice function $\rho$ is an $\alpha$-RNUM if and only if it satisfies the $\alpha$-correlation bounds.
\end{proposition}

We split the proof into two lemmas. The first lemma implies necessity, whereas the second lemma implies sufficiency of the $\alpha$-correlation bounds. This expositional choice is because the two directions convey different points. On the one hand, the contrapositive of the necessity statement provides a lower bound on the fraction of rational decision-makers, a result that requires independent discussion. On the other hand, the sufficiency part of proposition \ref{pro: aRUM} suggests a comparative statics result between the correlation bounds and the fraction of irrational decision-makers.

\begin{lemma}[Necessity] \label{lemma: fraction}
If $\mathbb{C}_{P}^{\rho} \geq 1- \alpha$ for some preference $P$ then for all RCMs $\mu$ such that $\rho_{\mu} = \rho$ the fraction of rational decision-makers is at least $1-\alpha$.
\end{lemma}
\begin{proof}
The proof follows immediately from the Frechét bounds in Lemma \ref{lemma: strict}. To see this, let $P$ be a preference such that $\mathbb{C}^{\rho}_{P} \geq 1 - \alpha$. Let $Q$ be the "twin-preference" of $P$. Then lemma \ref{lemma: strict} implies that $$\mu( \{c_P,c_Q\} ) \geq \max\{0,  \sum_{A \in \mathcal{A}(P)} \rho(c_P(A),A) - (K-2)\} =1- \alpha$$ for all RCMs $\mu$ with $\rho=\rho_{\mu}$.
\end{proof}

To see that the $\alpha$-correlation bounds are necessary for an $\alpha$-RNUM representation, assume that $\mathbb{C}^{\rho}_P > 1-\alpha$ for some preference $P$. Then  $\mathbb{C}^{\rho}_P = 1-\alpha + s$ for some $s > 0$. Lemma \ref{lemma: fraction} then implies that every RCM representation of $\rho$ has at least $1-\alpha + s$ rational decision-makers, and so cannot be an $\alpha$-RNUM. Lemma \ref{lemma: fraction} provides a \emph{lower bound} on the fraction of rational decision-makers in any RCM representation of a given $\rho$. Being a lower bound, this is a rather conservative measure. In reality, the fraction of rational decision-makers is likely higher. To see this, one could obtain an upper bound on the fraction of rational decision-makers by performing a similar separation exercise as in \cite{apesteguia2021separating}. I.e. by finding the maximal fraction $\alpha$ such that $\rho= \alpha \rho_{\mu} + (1-\alpha) \rho^{'}$ for some RUM $\mu$.\footnote{The problem of finding such a maximal fraction is not an easy one. \cite{apesteguia2021separating} provide a method for SCRUMs \citep{apesteguia2017single}, while for generic RUMs, this remains an open problem we do not tackle.} The "true" fraction of rational individuals is probably between these extremes.

\begin{lemma}[Sufficiency] \label{lemma: suff}
If $\mathbb{C}_{P}^{\rho} \leq 1- \alpha$ for all preferences $P$ then $\rho$ is an $\alpha$-RNUM.
\end{lemma}

\begin{proof}
Let $P$ and $Q$ be twin-preferences. If $1- \alpha = 0$ then $\rho$ is an RNUM by theorem \ref{thm: main}. Thus, assume that $\alpha \in (0,1)$ so that $0 < \mathbb{C}^{\rho}_{P} < 1$. Let $\mu$ be an RCM representation of $\rho$. Then it follows that $1>\mu(\{c_P,c_{Q}\})  > 0$. Define a dual RUM $\rho^d$ for all $a \in A \subseteq X$ by $$\rho^d(a,A) = \frac{\mu(c_P)}{\mu(\{c_P,c_Q\} )} \bold{1}\{a= \max(P,A)\} +\frac{\mu(c_Q)}{\mu(\{c_P,c_Q\} )}  \bold{1}\{a= \max(Q,A)\},$$ and define $\hat{\rho}(a,A) =  \frac{\rho(a,A) - (1-\alpha) \rho^{d}(a,A)}{\alpha}$ for all $a \in A$ and $A \subseteq X$. Then $\rho= (1-\alpha) \rho^{d} + \alpha \hat{\rho}$. Further, we note that 
$$ \mathbb{C}^{\hat{\rho}}_{P}  + [K-2]  = \frac{1}{\alpha} \sum_{A \in \mathcal{A}(P)} [\rho(c_P(a),A) - (1-\alpha) \rho^{d}(c_P(a),A)] =$$ 
$$ = \frac{1}{\alpha} [\mathbb{C}^{\rho}_P + [K-2] - (1-\alpha)[K-1]] \leq$$
$$= \frac{1}{\alpha} [(1-\alpha) + [K-2] - (1-\alpha)[K-1]]  =   K-2.$$ Hence, it follows that $\mathbb{C}^{\hat{\rho}}_{P} \leq 0$, and by this it follows that $\rho$ has an RNUM representation with a population of irrational decision-makers. Let $\mu^d$ and $\hat{\mu}$ be such that $\rho^d=\rho_{\mu^d}$ and $\hat{\rho}= \rho_{\hat{\mu}}$. Then it is clear that $\rho= \rho_{\alpha  \mu^d + (1-\alpha) \hat{\mu}}$ and hence that $\rho$ is an $\alpha$-RNUM. \end{proof}

Note that $\mathbb{C}^{\rho}_P$ can be viewed as a measure of the correlation between $\rho$ and preference $P$. Thus, Lemma \ref{lemma: suff} says that if the correlation between $\rho$ and every rational decision maker is lower than $1-\alpha$, then at least a fraction of $\alpha$ decision-makers must be irrational. The preceding result also suggests that the lower the correlation between $\rho$ and any preference $P$ is (so $\rho$ satisfies the $\alpha$-correlation bounds for a higher $\alpha$), the larger the fraction $\alpha$ of irrational decision-makers will be. In the next subsection, we discuss comparative statics results along these lines. 

\subsection{Comparative statics on the fraction of irrational decision-makers} \label{sec: comp}

We are now ready to state our comparative statics result on the fraction of irrational decision-makers. For each stochastic choice function $\rho$ define $$\mathbb{I}(\rho) =1- \max_{P \in \mathcal{P}} \mathbb{C}^{\rho}_P.$$ Note that $\mathbb{I}(\rho)$ measures the correlation between $\rho$ and the irrational decision-makers: the higher $\mathbb{I}(\rho)$ is, the lower is $\max_{P \in \mathcal{P}} \mathbb{C}^{\rho}_P$, i.e. the lower is the correlation between $\rho$ and the rational decision-makers. We next show that $\mathbb{I}(\rho)$ can also be viewed as a measure of irrationality, in the sense that it equals the maximal fraction of irrational decision-makers compatible with $\rho$.  

\begin{proposition} \label{pro: irrcomp} Let $\rho$ be a stochastic choice function then $$\max_{\substack{\mu \in \Delta(\mathcal{P}) \\ \rho=\rho_{\mu}}} \mu(\mathcal{C}\setminus\mathcal{P}) =\mathbb{I}(\rho),$$ i.e. $\mathbb{I}(\rho) $ is the maximal fraction of irrational decision-makers in any RCM representation $\mu$ of $\rho$. \end{proposition}

The preceding result thus implies that if $\rho$ and $\rho'$ are stochastic choice functions such that $\mathbb{I}(\rho) > \mathbb{I}(\rho')$ then $\rho$ is "more irrational" than $\rho'$ in the sense that it can be represented by a larger fraction of irrational decision-makers. The proof of proposition \ref{pro: irrcomp} is straightforward and follows directly from the characterization of $\alpha$-RNUMs. To see this, set $\alpha = \mathbb{I}(\rho)$ and note that $\mathbb{C}^{\rho}_P \leq 1- \mathbb{I}(\rho)= 1-\alpha$ for all preferences $P$. Hence $\rho$ has an $\alpha$-RNUM representation with a fraction of $\mu(\mathcal{C}\setminus\mathcal{P}) \geq \alpha = \mathbb{I}(\rho)$ irrational decision makers. Next, note that since $1- \mathbb{I}(\rho) \leq \max_{P \in \mathcal{P}} \mathbb{C}^{\rho}_P$ lemma \ref{lemma: fraction} implies that every RCM representation $\mu$ of $\rho$ has at least a fraction $1-\mathbb{I}(\rho)$ of rational decision-makers, hence at most a fraction $\mathbb{I}(\rho)$ of irrational decision-makers.

\section{Discussion I - Non-falsifiability of rationality under RUM} \label{sec: discussion}

In Theorem \ref{thm: main}, we look at irrational representations within the set of RUMs. An implication is that many RUMs are described by populations of irrational individuals. I.e. the RUM hypothesis is valid in the aggregate but fails at the individual level. This is an extreme case of what we call the non-falsifiability of the rational interpretation of the RUM hypothesis. However, there are many other cases where this may be an issue, such as when only a fraction of the population is irrational. Our results on $\alpha$-RNUMs allow us to study these "intermediate" cases of non-falsifiability.

\subsection{The limits of non-falsifiability.} In the section we show exactly when non-falsifiability \emph{may} be an issue for RUM, in the sense that a given RUM is also explained by a fraction of irrational individuals (i.e. an RCM $\mu$ with $\mu(\mathcal{C}\setminus \mathcal{P}) >0$).

\begin{proposition} \label{pro: piRUM} \leavevmode Let $\rho$ be RUM. Then the following claims are equivalent
\begin{enumerate}
    \item \label{fals1} $\rho$ has an RCM representation $\mu$ with $\mu(\mathcal{C} \setminus \mathcal{P}) > 0$.
        \item \label{fals2} $\rho$ has an $\alpha$-RNUM representation for some $\alpha > 0$.
          \item  \label{fals3} $\mathbb{C}^{\rho}_P <1$ for all preferences $P$. 

\end{enumerate}

\end{proposition}
\begin{proof}
We first show that (\ref{fals1}) implies  (\ref{fals2}). Let $\rho$ have an RCM representation  $\mu$ with $\mu(\mathcal{C} \setminus \mathcal{P}) > 0$. Set $\alpha =\mu(\mathcal{C} \setminus \mathcal{P}) >0$ then clearly $\rho$ has an $\alpha$-RNUM representation. Next, we show that (\ref{fals2}) implies  (\ref{fals3}). If $\rho$ has an $\alpha$-RNUM representation then it satisfies the $\alpha$ correlation bounds for some $\alpha >0$, hence $\mathbb{C}^{\rho}_P \leq \alpha <1$ for all preferences $P$. Finally to show that (\ref{fals3}) implies (\ref{fals1}) set $\alpha =\max_{P \in \mathcal{P}} \mathbb{C}^{\rho}_P$, then clearly $\mathbb{C}^{\rho}_P \leq \alpha$ for all $P$, and hence $\rho$ has an $\alpha$-RNUM representation, i.e. an RCM representation with  $\mu(\mathcal{C} \setminus \mathcal{P}) > 0$.\end{proof}

An implication of Proposition \ref{pro: piRUM} is that almost all RUMs are non-falsifiable in that they are compatible with a fraction of irrational decision-makers. In fact, every RUM with full support is. Now, we show that (i) irrational behavior may be unconstrained, and (ii) the fraction of irrational decision-makers compatible with the RUM hypothesis is monotonically related to the set of potential irrational behaviors.

\begin{proposition} \label{pro: falsifiability} Let $\rho^*$ be a RUM with full support and $\mathtt{M}$ a collection of RCMs. Then there is a $\bar{\alpha}(\mathtt{M}) \in (0,1)$  such that:
\begin{enumerate}
    \item if $1 > \alpha \geq \bar{\alpha}(\mathtt{M})$ then for all $\mu \in \mathtt{M}$ it holds that $\alpha \rho^* + (1-\alpha) \rho_{\mu}$ is a RUM.
    \item if $0 < \alpha < \bar{\alpha}(\mathtt{M})$ then there is a $\mu \in \mathtt{M}$ such that $\alpha \rho^* + (1-\alpha) \rho_{\mu}$ is not a RUM.
\end{enumerate}
\end{proposition}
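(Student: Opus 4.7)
\textbf{Proof plan for Proposition \ref{pro: falsifiability}.}

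My plan is to use the Block--Marschak (BM) polynomial characterization of RUMs recalled in Appendix \ref{app: BM}: a stochastic choice function $\rho$ is a RUM if and only if each of its BM polynomials is nonnegative. The crucial property I will exploit is that each BM polynomial is a linear functional of $\rho$, so for the convex combination $\rho_\alpha := \alpha\rho^* + (1-\alpha)\rho_\mu$ and any BM polynomial $q$,
$$q(\rho_\alpha) \;=\; \alpha\, q(\rho^*) \;+\; (1-\alpha)\, q(\rho_\mu).$$

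The main technical step is to translate the full-support hypothesis on $\rho^*$ into a strict uniform positivity bound. Because $\rho^*$ is a full-support RUM, every BM polynomial evaluated at $\rho^*$ is strictly positive, and since there are finitely many BM polynomials, there exists $\epsilon > 0$ with $q(\rho^*) \geq \epsilon$ for all $q$. On the other side, BM polynomials are fixed signed sums of choice probabilities in $[0,1]$, so there is a constant $M > 0$ (depending only on $|X|$) with $|q(\rho_\mu)| \leq M$ uniformly over every RCM $\mu$ and every BM polynomial $q$. Combining these yields $q(\rho_\alpha) \geq \alpha\epsilon - (1-\alpha)M$, which is nonnegative as soon as $\alpha \geq M/(M+\epsilon)$. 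This gives a uniform-in-$\mu$ threshold strictly below $1$ beyond which $\rho_\alpha$ is always a RUM.

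Next, for each $\mu \in \mathtt{M}$ I would define the critical value $\alpha^{*}(\mu) := \inf\{\alpha \in [0,1] : \rho_\alpha \text{ is a RUM}\}$. A short convexity argument shows that $\{\alpha : \rho_\alpha \text{ is a RUM}\}$ is exactly the closed interval $[\alpha^{*}(\mu),1]$: if $\rho_{\alpha_0}$ is a RUM then for any $\alpha > \alpha_0$, $\rho_\alpha$ is itself a convex combination of $\rho^*$ and $\rho_{\alpha_0}$, hence a RUM. I then set $\bar\alpha(\mathtt{M}) := \sup_{\mu \in \mathtt{M}} \alpha^{*}(\mu)$, which by the previous paragraph is at most $M/(M+\epsilon) < 1$. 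Claim (1) follows immediately, since $\alpha \geq \bar\alpha(\mathtt{M}) \geq \alpha^{*}(\mu)$ for all $\mu$; and claim (2) is the defining property of a supremum, since for $\alpha < \bar\alpha(\mathtt{M})$ some $\mu \in \mathtt{M}$ witnesses $\alpha^{*}(\mu) > \alpha$, so $\rho_\alpha$ fails to be a RUM for that $\mu$.

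The step I expect to require the most care is verifying the strict inequality $\bar\alpha(\mathtt{M}) > 0$, since it is not literally implied by the construction unless $\mathtt{M}$ contains at least one $\mu$ with $\rho_\mu$ not a RUM (the interesting case this proposition is meant to address). In that case, selecting a BM polynomial $q$ with $q(\rho_\mu) < 0$ and solving $\alpha\, q(\rho^*) + (1-\alpha)\, q(\rho_\mu) = 0$ pins down $\alpha^{*}(\mu) \geq |q(\rho_\mu)|/(q(\rho^*)+|q(\rho_\mu)|) > 0$, forcing $\bar\alpha(\mathtt{M}) > 0$ and completing the argument.
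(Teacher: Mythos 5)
Your proposal is correct and its engine is the same as the paper's: the Block--Marschak characterization, linearity of the BM polynomials under mixing (the paper's lemma \ref{lemma: convex}), strict positivity $q(\rho^*)\geq \epsilon$ from full support, a uniform lower bound $-M$ on the BM polynomials over $\mathtt{M}$, and the threshold $M/(M+\epsilon)$. Two differences are worth noting. First, where the paper obtains $M$ by passing to the closure $\mathtt{M}^*$ and invoking compactness plus continuity of $\rho \mapsto \min_{a,A}\text{BM}_{\rho}(a,A)$, you simply observe that every BM polynomial is a fixed signed sum of probabilities in $[0,1]$ and is therefore uniformly bounded over \emph{all} stochastic choice functions; this is more elementary and makes the compactness step unnecessary. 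Second, and more substantively, the paper's written proof only establishes part (1): it picks \emph{some} $\bar{\alpha} > M/(\beta+M)$ and verifies the mixture is a RUM above it, without addressing tightness. You instead define the per-$\mu$ critical value $\alpha^{*}(\mu)$, show via the upward-closedness/closedness argument that the set of good $\alpha$ is exactly $[\alpha^{*}(\mu),1]$, and set $\bar{\alpha}(\mathtt{M})=\sup_{\mu}\alpha^{*}(\mu)$, which delivers part (2) as well. You also correctly flag the degenerate case: if every $\rho_{\mu}$ with $\mu\in\mathtt{M}$ is already a RUM, then no $\bar{\alpha}\in(0,1)$ can satisfy part (2), so the proposition implicitly assumes $\mathtt{M}$ contains at least one non-RUM; the paper's proof is silent on this point. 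In short, your route is the paper's route, but carried further and on slightly leaner hypotheses.
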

\begin{proof}
The proof is in Appendix \ref{app: other}.\footnote{The overall idea of this result is to counterweight a $\rho$ that is not a RUM, i.e. it has negative BM polynomials, with a $\rho^*$ that is a RUM with full support, i.e. it has strictly positive BM polynomials (everywhere). In Appendix \ref{app: BM}, we discuss the standard characterization of RUMs with BM polynomials, as well as some preliminary results that play a role in the proofs of the results in this section. 
}
\end{proof}

Proposition \ref{pro: falsifiability} shows that whenever the rational decision-makers are represented by a RUM with full support, then there exists a lower bound on the fraction $\bar{\alpha}(\mathtt{M})$ of rational decision-makers such that the RUM hypothesis is satisfied irrespective of the behavior of the remaining decision-makers. We present an illustration of Proposition \ref{pro: falsifiability} within our opening example.

\begin{customthm}{1}[continued] \label{ex1.1c}

Consider a population where a fraction of decision-makers are rational and, differently from above, have a uniform distribution on the set of preferences on $\{ c, s, f \}$; while the remaining ones are irrational. Simple calculations show that whenever at least $\bar{\alpha}(\mathtt{M}) = 6/7$ of the decision-makers are rational, then the behavior of the remaining irrational ones is unconstrained w.r.t. stochastic rationality.\footnote{To see this, note that regularity is a necessary and sufficient condition for RUM with only three alternatives \citep{block}. Suppose $\mu$ is the proportion of decision-makers who behave rationally with a uniform distribution on the set of preferences, and $1-\mu$ are those who behave irrationally. Assuming all irrational decision-makers make the same mistake, regularity is satisfied whenever $1/2 \mu - 1/3 \mu + \mu \geq 1$, or equivalently, $\mu \geq 6/7$.} Proposition \ref{pro: falsifiability} also shows that the lower bound is tight, i.e. whenever there are less than $\bar{\alpha}(\mathtt{M})$ rational decision-makers then there is a group of $1-\bar{\alpha}(\mathtt{M})$ irrational decision-makers that will induce a rejection of the RUM hypothesis. Again, if $\alpha < 6/7$ then if $\mu$ is the RCM where all decision-makers choose $c$ from $csf$ and $s$ from $cs$, then $\alpha \rho^* + (1- \alpha) \rho_{\mu}$ is not a RUM.  
\end{customthm}

The bound defined in Proposition \ref{pro: falsifiability} may seem extreme at first glance, however, it assumes the extreme scenario in which every irrational decision-maker makes the same mistake. To see how the bound changes with the set of potential irrational behaviors, if $c_1$ is at most twice as likely to appear as $c_2$, the fraction of irrational decision-makers compatible with the RUM hypothesis increases from 1/7 to 1/3, while, as shown in our opening example, if $c_1$ and $c_2$ are equally probable this fraction becomes one.\footnote{Let $\mu$ be the proportion of decision-makers who switch to steak when frogs' legs are available. This implies that $1/2 \mu$ decision-makers have the opposite irrational behavior and $1 - 3/2 \mu$ decision-makers are rational with uniform preferences. The same inequality as in the previous footnote yields $\mu \leq 2/9$, or equivalently, a proportion of rational decision-makers greater or equal to $2/3$.}

The following corollary formalizes the relationship between our bound and the behavior of irrational decision-makers.

\begin{corollary} \label{cor: Ircor}
Let $\mathtt{M}, \mathtt{M}'$ be two collections of RCMs, and $\rho^*$ a RUM with full support. Then $\mathtt{M}' \subseteq \mathtt{M}$ implies $\bar{\alpha}(\mathtt{M}') \leq \bar{\alpha}(\mathtt{M})$.
\end{corollary}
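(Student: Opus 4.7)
The plan is to reduce the claim to a one-line monotonicity argument, using Proposition \ref{pro: falsifiability} as a black box. The key observation is that $\bar\alpha(\mathtt{M})$ admits the characterization
$$\bar\alpha(\mathtt{M}) = \inf\bigl\{\alpha \in (0,1) : \alpha \rho^* + (1-\alpha)\rho_\mu \text{ is a RUM for every } \mu \in \mathtt{M}\bigr\},$$
which is immediate from parts (1) and (2) of that proposition: part (1) says every $\alpha \geq \bar\alpha(\mathtt{M})$ lies in this set, while part (2) says no $\alpha < \bar\alpha(\mathtt{M})$ does. Thus $\bar\alpha$ is the infimum of a set that depends on $\mathtt{M}$ only through a universal quantifier over its elements.

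First I would fix any $\alpha$ with $\bar\alpha(\mathtt{M}) \leq \alpha < 1$. Part (1) of Proposition \ref{pro: falsifiability} gives that $\alpha \rho^* + (1-\alpha)\rho_\mu$ is a RUM for every $\mu \in \mathtt{M}$; since $\mathtt{M}' \subseteq \mathtt{M}$, the same conclusion holds \emph{a fortiori} for every $\mu \in \mathtt{M}'$. Therefore $\alpha$ belongs to the set defining $\bar\alpha(\mathtt{M}')$, so $\bar\alpha(\mathtt{M}') \leq \alpha$. Taking the infimum over admissible $\alpha$ yields $\bar\alpha(\mathtt{M}') \leq \bar\alpha(\mathtt{M})$, as desired.

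I do not expect any real obstacle: this is pure monotonicity of an infimum under set inclusion, as the paper itself acknowledges by phrasing the corollary as an "immediate" consequence of Proposition \ref{pro: falsifiability}. The only subtlety worth mentioning is that one needs $\bar\alpha(\mathtt{M}') \in (0,1)$ to be well-defined, so that the inequality is between meaningful quantities; this is inherited from applying Proposition \ref{pro: falsifiability} to $\mathtt{M}'$ itself.
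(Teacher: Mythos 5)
Your proof is correct and is exactly the argument the paper has in mind: the paper offers no separate proof, asserting the corollary "follows immediately" from Proposition \ref{pro: falsifiability}, and your monotonicity-of-the-infimum reduction (using part (1) for $\mathtt{M}$ and part (2) for $\mathtt{M}'$ to pin down $\bar{\alpha}(\mathtt{M}')$) is the natural way to make that immediacy precise. No gaps.
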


A brief comment w.r.t. the recent literature is now required as both Proposition \ref{pro: falsifiability} and Corollary \ref{cor: Ircor} remind the approach of \cite{apesteguia2021separating}. The authors aim to separate the randomness of a stochastic choice function that is aligned with a model, i.e. a subset of stochastic choice functions, from residual behavior, i.e. randomness that cannot be explained by the model, with the purpose of "explaining the largest possible fraction of data using the model". To be a non-trivial exercise, the stochastic choice function that the authors focus on has to be outside the model. By studying representations of RUMs with irrational decision-makers, we show that even if a stochastic choice function $\rho$ is a RUM, i.e. within the model, we can construct a separation between a rational representation and an irrational one with the latter being fully unstructured as in \cite{apesteguia2021separating}.

\subsection{Every RUM is asymptotically arbitrarily close to an I-RUM}

We have shown that the non-falsifiability issue, i.e. the existence of $\alpha$-RNUM representations, is widespread and that often irrational behaviors are unconstrained. We conclude this section by showing that irrational representations are paramount "in the limit". Specifically, we show that all RUMs can be approximated\footnote{For convenience, we use the maximum norm restricted to the set of stochastic choice functions (but the result holds w.r.t. any norm on $\mathbb{R}^d$ since all norms are equivalent on $\mathbb{R}^d$).} arbitrarily close by an I-RUM when the set of alternatives $X$ is sufficiently large.

 \begin{proposition} \label{pro: almost} For every RUM $\rho$  there is an I-RUM $\rho'$ with $$\left\lVert \rho-\rho' \right\rVert \leq \frac{1}{2^{|X|}-|X|-1}.$$ \end{proposition}
 \begin{proof}
The proof is in Appendix \ref{app: other}.
 \end{proof}
 
 For instance, if there are exactly five alternatives in $X$, i.e. $|X| = 5$, then $2^{|X|}-|X|-1=2^5-5-1= 26$, the distance from a RUM to the set of I-RUMs is less than $1/25$. This bound decreases exponentially as the number of alternatives increases.

\section{Discussion II - Individual vs stochastic rationality} \label{subsec: paradox}

By being valid for all aggregate data, Proposition \ref{pro: aRUM} reveals the tension between the notions of stochastic and individual rationality. On the one hand, there are stochastically rationalizable choices with an irrational representation. On the other hand, there are stochastic choice functions that are not stochastically rational but that lack an irrational representation. The paradoxical conclusion that emerges is thus that some non-rationalizable stochastic choice functions may need more rational choice functions than some rationalizable ones do.

We clarify this point within example \ref{ex1.1}. Consider the case where $\rho(c,cf) = \rho(s,sf) = 1$, $\rho(f,csf) = 0$, i.e. aggregate choices are highly concentrated. The set of stochastic choice functions that satisfy these constraints can be represented in a 2-dimensional space as in Figure \ref{fig: example} below. Here, the set of RUMs is characterized by the equation $\rho(c, csf)=\rho(c, cs)$ and the set of RNUMs by the equation $\rho(c, csf) = 1 - \rho(c, cs)$. This implies, as we know, that the only I-RUM is the point $\rho(c, csf) = \rho(c, cs) = \frac12$. Note that, a minor perturbation $\varepsilon>0$ such that $\rho^{\varepsilon}(c, csf) = \frac12 + \varepsilon$ implies both that $\rho^{\varepsilon}$ is not a RUM because regularity is violated and also that $\rho^{\varepsilon}$ is not an RNUM, and so cannot be represented by a population of irrational decision-makers because the correlation bounds are violated (see the blue point in the middle of Figure \ref{fig: example}).

\begin{figure}[H]
% \vspace{2em}
    \centering

\tikzset{every picture/.style={line width=0.75pt}} %set default line width to 0.75pt        
\begin{tikzpicture}[x=1pt,y=1pt,yscale=-1,xscale=1]
%uncomment if require: \path (0,299); %set diagram left start at 0, and has height of 299

%Shape: Square [id:dp6432248830461923] 
\draw   (255.3,87) -- (400,87) -- (400,231.7) -- (255.3,231.7) -- cycle ;
%Straight Lines [id:da47564370035849124] 
\draw [color={rgb, 255:red, 74; green, 144; blue, 226 }  ,draw opacity=1 ]   (394.75,101.7) ;
\draw [shift={(394.75,101.7)}, rotate = 0] [color={rgb, 255:red, 74; green, 144; blue, 226 }  ,draw opacity=1 ][fill={rgb, 255:red, 74; green, 144; blue, 226 }  ,fill opacity=1 ][line width=0.75]      (0, 0) circle [x radius= 3.35, y radius= 3.35]   ;
\draw [shift={(394.75,101.7)}, rotate = 0] [color={rgb, 255:red, 74; green, 144; blue, 226 }  ,draw opacity=1 ][fill={rgb, 255:red, 74; green, 144; blue, 226 }  ,fill opacity=1 ][line width=0.75]      (0, 0) circle [x radius= 3.35, y radius= 3.35]   ;
%Straight Lines [id:da8118225036139766] 
\draw [color={rgb, 255:red, 208; green, 2; blue, 27 }  ,draw opacity=1 ][fill={rgb, 255:red, 208; green, 2; blue, 27 }  ,fill opacity=1 ]   (327.65,159.35) ;
\draw [shift={(327.65,159.35)}, rotate = 0] [color={rgb, 255:red, 208; green, 2; blue, 27 }  ,draw opacity=1 ][fill={rgb, 255:red, 208; green, 2; blue, 27 }  ,fill opacity=1 ][line width=0.75]      (0, 0) circle [x radius= 3.35, y radius= 3.35]   ;
\draw [shift={(327.65,159.35)}, rotate = 0] [color={rgb, 255:red, 208; green, 2; blue, 27 }  ,draw opacity=1 ][fill={rgb, 255:red, 208; green, 2; blue, 27 }  ,fill opacity=1 ][line width=0.75]      (0, 0) circle [x radius= 3.35, y radius= 3.35]   ;
%Straight Lines [id:da6369985648156693] 
\draw [color={rgb, 255:red, 208; green, 2; blue, 27 }  ,draw opacity=1 ][line width=0.75]    (255.3,231.7) -- (400,87) ;
%Straight Lines [id:da7927463894474832] 
\draw [color={rgb, 255:red, 74; green, 144; blue, 226 }  ,draw opacity=1 ]   (338.75,157.7) ;
\draw [shift={(338.75,157.7)}, rotate = 0] [color={rgb, 255:red, 74; green, 144; blue, 226 }  ,draw opacity=1 ][fill={rgb, 255:red, 74; green, 144; blue, 226 }  ,fill opacity=1 ][line width=0.75]      (0, 0) circle [x radius= 3.35, y radius= 3.35]   ;
\draw [shift={(338.75,157.7)}, rotate = 0] [color={rgb, 255:red, 74; green, 144; blue, 226 }  ,draw opacity=1 ][fill={rgb, 255:red, 74; green, 144; blue, 226 }  ,fill opacity=1 ][line width=0.75]      (0, 0) circle [x radius= 3.35, y radius= 3.35]   ;
%Shape: Circle [id:dp7023242248784456] 
\draw  [dash pattern={on 0.84pt off 2.51pt}] (308.45,159.35) .. controls (308.45,148.75) and (317.05,140.15) .. (327.65,140.15) .. controls (338.25,140.15) and (346.85,148.75) .. (346.85,159.35) .. controls (346.85,169.95) and (338.25,178.55) .. (327.65,178.55) .. controls (317.05,178.55) and (308.45,169.95) .. (308.45,159.35) -- cycle ;
%Straight Lines [id:da2716797646565182] 
\draw    (311,123) -- (324.84,149.92) ;
\draw [shift={(325.75,151.7)}, rotate = 242.8] [color={rgb, 255:red, 0; green, 0; blue, 0 }  ][line width=0.75]    (6.56,-1.97) .. controls (4.17,-0.84) and (1.99,-0.18) .. (0,0) .. controls (1.99,0.18) and (4.17,0.84) .. (6.56,1.97)   ;
%Straight Lines [id:da7668134691956956] 
\draw [color={rgb, 255:red, 74; green, 144; blue, 226 }  ,draw opacity=1 ][line width=0.75]    (255.3,87) -- (400,231.7) ;

% Text Node
\draw (237,176) node [anchor=north west][inner sep=0.75pt]  [font=\footnotesize,rotate=-270] [align=left] {$\displaystyle \rho ( c,cs)$};
% Text Node
\draw (310,235) node [anchor=north west][inner sep=0.75pt]  [font=\footnotesize] [align=left] {$\displaystyle \rho ( c,csf)$};
% Text Node
\draw (243,233) node [anchor=north west][inner sep=0.75pt]  [font=\scriptsize] [align=left] {$\displaystyle 0$};
% Text Node
\draw (243,80) node [anchor=north west][inner sep=0.75pt]  [font=\scriptsize] [align=left] {$\displaystyle 1$};
% Text Node
\draw (402,234.7) node [anchor=north west][inner sep=0.75pt]  [font=\scriptsize] [align=left] {$\displaystyle 1$};
% Text Node
\draw (338.12,125.69) node [anchor=north west][inner sep=0.75pt] [font=\footnotesize,rotate=-315] [align=left] {RUMs};

\draw (286.75,108.7) node [anchor=north west][inner sep=0.75pt] [font=\scriptsize] [align=left] {I-RUMs};
% Text Node
\draw (284.91,125.94) node [anchor=north west][inner sep=0.75pt] [font=\scriptsize,rotate=-45] [align=left] {\text{RNUMs}};

 \end{tikzpicture}
%     \begin{minipage}{\linewidth}
%         \centering
%         \caption{}
%         \label{fig: example}
%     \end{minipage}
    \vspace{-1em}
    \caption{}
        \label{fig: example}
% \end{wrapfigure}
\end{figure}

The above reasoning can be pushed to a limit by letting $\rho(c, csf) > \rho(c, cs)$ and both  tend to one (see the blue point in the top-right of Figure \ref{fig: example}). This stochastic choice function violates the RUM hypothesis, while almost all decision-makers are individually rational. This example highlights the following point (which can be generalized to other examples). If the analyst observes a stochastic choice function with concentrated probability mass that is not a RUM then she is certain that: (i) both a RUM and an RNUM representation do not exist; (ii) a fraction of decision-makers are individually irrational; and (iii) a fraction of decision-makers are individually rational.

\subsection{An example within the theory of demand}

We apply the above reasoning to the Theory of Demand using an example by \cite{kitamura2018nonparametric}. We do not claim novelty for the intuitions that follow as among others \cite{im2022non} pointed out similar facts. However, we would like to re-interpret this example, in which correlation bounds are necessary and sufficient \citep{matzkin2007heterogeneous}, within our framework to provide some further insights. Figure \ref{fig:KS} displays the potential choices of decision-makers from two budget sets: $\mathcal{B}_1, \mathcal{B}_2$.

Assuming Walras' Law, $\pi_{i|j}$ denotes the share of the decision-makers who choose in the segment $i$ from the budget set $\mathcal{B}_j$. Ignoring the intersection, the choice probabilities in this example are described by the vector $(\pi_{1|1},\pi_{2|1},\pi_{1|2},\pi_{2|2})$. There are four possible pairs of choices that the decision-makers can make and only one of them is irrational, i.e. $\pi_{1|1}, \pi_{1|2}$. By similar reasoning as in our example \ref{ex1.1}, this setting does not admit I-RUMs. Nonetheless, our discussion easily follows. 

\begin{figure}[H]
    \centering
    \includegraphics[width=0.50\textwidth]{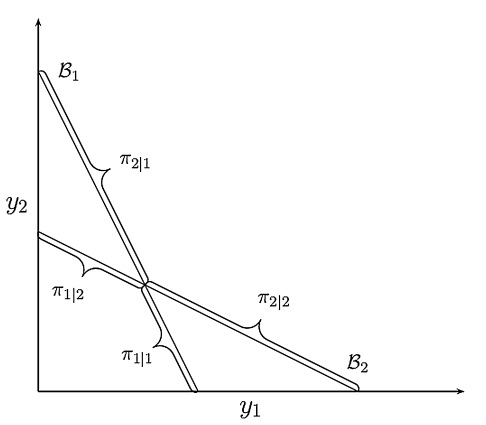}
    \vspace{-1em}
    \caption{Example from \cite{kitamura2018nonparametric}}
    \label{fig:KS}
\end{figure}

First, suppose we observe the stochastically rational choice probabilities $(\pi_{1|1},\pi_{2|1},\pi_{1|2},\pi_{2|2}) = (0.5, 0.5, 0.5, 0.5)$. This RUM has an $\alpha$-RNUM representation where the fraction of irrational decision-makers is at most 50\%. The contingency tables below display two possible scenarios in which either 0\% or 50\% of the decision-makers are irrational. One can also note that there is only one representation in which every decision-maker is rational, while there is an infinity of representations in which at least some decision-makers are irrational. In the contingency tables, we color the irrational choices in \textcolor{red}{red} and the rational ones in \textcolor{blue}{blue}. 

$$ \left[ \begin{array}{c| cc | c}
 & \pi_{1|1} & \pi_{2|1} &  \\ \hline
\pi_{1|2} & \textcolor{red}{0.5} & \textcolor{blue}{0} & 0.5 \\
\pi_{2|2} & \textcolor{blue}{0} & \textcolor{blue}{0.5} & 0.5 \\ \hline
 & 0.5 & 0.5 & 1
\end{array} \right]  \left[ \begin{array}{c| cc | c}
 & \pi_{1|1} & \pi_{2|1} &  \\ \hline
\pi_{1|2} & \textcolor{red}{0} & \textcolor{blue}{0.5} & 0.5 \\
\pi_{2|2} & \textcolor{blue}{0.5} & \textcolor{blue}{0} & 0.5 \\ \hline
 & 0.5 & 0.5 & 1
\end{array} \right] $$

Imagine now to observe the following non-stochastically rational choice probabilities: $(\pi_{1|1},\pi_{2|1},\pi_{1|2},\pi_{2|2}) = (1, 0, 0.1, 0.9)$. A rapid inspection will reveal that the fraction of irrational decision-makers is both at least and at most 10\% and that the unique RCM that produces the choice probabilities is:

$$ \left[ \begin{array}{c| cc | c}
 & \pi_{1|1} & \pi_{2|1} & \\ \hline
\pi_{1|2} & \textcolor{red}{0.1} & \textcolor{blue}{0} & 0.1 \\
\pi_{2|2} & \textcolor{blue}{0.9} & \textcolor{blue}{0} & 0.9 \\ \hline
 & 1 & 0 & 1
\end{array} \right] $$

This simple example reveals a couple of interesting facts. First, statistical tests that focus solely on the distance between the observed aggregate choice probabilities and the set of RUMs may lose relevant information regarding the rationality of the underlying behavior of the decision-makers. Second, as noticed by \cite{im2022non}, assuming individual rationality implies the identification of 50\% of the population with a high (resp. low) marginal rate of substitution between the two goods while relaxing this assumption opens to the possibility (identification of the joint distribution is now lost) that 100\% of the individuals has a mild marginal rate of substitution with 50\% of them being rational and 50\% irrational. Clearly, welfare conclusions may differ substantially between these two scenarios. This example conveys a general message, namely that normative conclusions, e.g. welfare assessments, are never only data-driven (or theory-free) but always rely on the analyst interpretation, in our case through different representations. Different interpretations may lead to starkly different normative conclusions which implies, for example, that the prior knowledge of the analyst about individual rationality is of crucial importance.

\section{Two final Extensions} \label{sec: extension}

\subsection{Ordered domains}

As aforementioned, Theorem \ref{thm: main} relies on the non-trivial detail that the universe of irrational behaviors is much larger than that of rational ones. The reader may, therefore, question whether our results would follow under some restrictions on irrational behaviors. As mentioned in Section \ref{sec: main}, a closer look at our proof of Theorem \ref{thm: main} already reveals a first answer, as our construction only relies on irrational choice functions that are 1-step away from the rational ones. This implies that irrational representations of RUMs only rely on a small subset of irrational behaviors, i.e. RNUMs, with a size that tends to 0 as $N$ goes to infinity.\footnote{Interesting discussions on the size of boundedly rational models can be found in \cite{kalai}, \cite{giarlotta2022bounded}, \cite{giarlotta2023context}.}

Recent influential literature has investigated other potential restrictions on the heterogeneity of rational and (potentially) irrational behaviors. \cite{apesteguia2017single} studied a subset of RUMs known as Single-Crossing RUMs while \cite{filiz2023progressive} studied a subset of RCMs known as Progressive Random Choice. We refer to these papers for formal definitions. Focusing on subsets of preferences or choice functions that can be ordered along one dimension, these papers provided unique identification results that, interestingly, seem to clash with our indeterminacy result (see \cite{chambers2025ordered} for a discussion on identification results and Frech\'et bounds). We show that limiting the preference heterogeneity in RUMs to one dimension does not affect our results, while limiting the heterogeneity of the mistakes "almost" fully limits the explanatory power of irrational representations.

\begin{observation}
\leavevmode
\begin{enumerate}
    \item If $\rho$ is a full-support I-RUM then its irrational representation is not progressive.\footnote{The proof is simple. Suppose the support of an irrational representation $\mu$ is progressive w.r.t. to the strict total order $\triangleright$, sort the choice functions $\{ c_1, \dots, c_T \}$. By irrationality of $c_1$, there are menus $B  \subseteq A \subseteq X$ and distinct alternatives $a,b \in B$ with $c_1(A)=a$  and $c_1(B)=b$. This implies $\rho_{\mu}(a,A), \rho_{\mu}(b,B) > 0$. If $b \triangleright a$ then $\rho_{\mu}(a,B) = 0$ violating regularity and so $\rho_{\mu}$ is not a RUM. If $a \triangleright b$ then $\rho(b,A) = 0$ and the I-RUM is not full-support.}

    \item $\rho$ is an I-SCRUM if and only if it is a SCRUM and satisfies the correlation bounds.
    \item Almost all I-RUMs (I-SCRUMs) do not have a progressive irrational representation.
\end{enumerate}
\end{observation}

\subsection{The Logit Model}

Our final focus is on the most influential among the RUMs, the Logit Model \citep{luce59} which is commonly considered a canon of stochastic rationality \citep{cerreia2021canon}. We show that our main result can be straightforwardly translated to the Logit Model exploiting the existence of a utility function, hence an ordering of the alternatives. Call a stochastic choice function $\rho$ aligned w.r.t. a preference $P$ if for all menus $A$: $\rho(a,A) > \rho(b,A)$ implies $aPb$. The following observation suggests that in the case of aligned stochastic choice functions the correlation bounds only bind at the relevant preference $P$ as for all menus $A$ and $P' \not= P$, it holds that $\rho(c_P(A), A) \geq \rho(c_{P'}(A), A)$ since $c_P(A) P  c_{P'}(A)$.

\begin{observation}
Let $\rho$ be aligned w.r.t. a preference $P$. Then $\rho$ has an irrational representation if and only if $\mathbb{C}^{\rho}_P \leq 0$.
\end{observation}

This observation naturally extends to the Logit Model by letting $P$ be a strict extension of the utility function. A consequence is that a sufficient condition for a Luce stochastic choice function to have an irrational representation is: $u(a) \leq (K-2)u(b)$ for all $a,b \in X$ (the proof follows almost immediately from the above observation and it is therefore omitted). This condition intuitively shows that even the strictest notion of stochastic rationality is susceptible to irrational representations whenever the utilities are not sufficiently far apart. Finally, we conclude by noticing that the present argument extends to other (aligned) influential models such as the additive perturbed utility model \citep{fudenberg} and the simple scalability model \citep{tversky1972choice}.

\section{Concluding remarks}

In the last decades, economics has welcomed several critiques of its rational foundations. Being located within the literature on individual decision-making, our paper fits within this tendency as we have provided a characterization of the set of populations that are rational at an aggregate level and can be also represented as irrational ones at an individual level. 

In these concluding remarks, we touch on a different and more positive viewpoint of our paper borrowed from \cite{becker}: "Economic theory is much more compatible with irrational behavior than had been previously suspected." In our characterization results, we not only show that rational aggregate choices may be individually fully irrational but also that the aggregation of individually irrational behaviors may well be stochastically rational and that individuals with correlated mistakes can never be translated into rational aggregate behavior. These observations resonate within the framework of standard economic theory where market efficiency is built upon uncorrelated errors of individual investors.

\newpage

\appendix

\section{Auxiliary results} \label{app: aux}

\subsection{Preliminaries on convex analysis}

A set $X \subset \mathbb{R}^d$ is convex if for all $x,y \in X$ and $\alpha \in (0,1)$ it holds that $\alpha x +(1-\alpha) y \in X$.  The convex hull of a set $Y \subseteq \mathbb{R}^d$ is the intersection of all convex sets containing $Y$ and is denoted $\text{conv}(Y)$, or equivalently the set of all points that can be obtained as convex combinations of points from $Y$. A point $x \in X \subseteq \mathbb{R}^d$ is called an \emph{extreme point} of $X$ if there is no $\alpha \in (0,1)$ and $y,z \in X\setminus \{x\}$ with $y \neq z$ such that $x = \alpha y +(1-\alpha) z$. In other words, a point $x \in X$ is an extreme point if it cannot be written as a non-trivial convex combination of two other distinct points in $X$. Denote the set of extreme points of a convex set $X$ by $\text{ext}(X)$. The next result is a basic result from convex analysis \cite*[p.155]{Rock70} and will be employed when showing the sufficiency of the correlation bounds in Theorem \ref{thm: main}.

\begin{lemma}\emph{(Carathéodory's theorem)} \label{lemma: cat} Every point $x$ in a convex set $X \subseteq \mathbb{R}^d$ can be written as a convex combination of the extreme points of $X$. I.e. $\text{conv}(\text{ext}(X)) = X$.\end{lemma}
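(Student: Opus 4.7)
My plan is to prove the lemma by induction on the dimension of the affine hull of $X$. Before starting I would note that the statement as written requires $X$ to be compact (otherwise an open ball provides a counterexample, since it has no extreme points at all); in the application to Theorem~\ref{thm: main} this holds because the set of RUMs satisfying the correlation bounds is cut out by finitely many linear inequalities inside a probability simplex, hence is a compact convex polytope in a finite-dimensional Euclidean space. So throughout the proof I will assume $X$ is compact, convex, and nonempty.

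The base case is dimension $0$, where $X$ is a single point which is trivially its own unique extreme point. For the inductive step, fix $x \in X$. If $x$ is already an extreme point of $X$ we are done, so assume not. Then I would pick any line $\ell$ through $x$ lying in the affine hull of $X$; by compactness of $X$, the intersection $\ell \cap X$ is a nondegenerate closed segment $[y,z]$, and we may write $x = \alpha y + (1-\alpha) z$ for some $\alpha \in [0,1]$. Since $y$ and $z$ are the endpoints of this segment, each lies on the relative boundary of $X$ and is therefore contained in a proper face $F_y$, respectively $F_z$, of $X$. These proper faces are themselves compact convex sets of strictly smaller dimension.

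The last step is to apply the inductive hypothesis to $F_y$ and $F_z$, which expresses $y$ and $z$ as convex combinations of the extreme points of $F_y$ and $F_z$. The routine fact that finishes the argument is that every extreme point of a face of $X$ is already an extreme point of $X$ itself: if such a point were a nontrivial convex combination of two points in $X$, then, since the face is an exposed piece of $X$, both of those points would have to lie in the face, contradicting extremality there. Substituting then writes $x$ as a convex combination of extreme points of $X$, proving $X \subseteq \text{conv}(\text{ext}(X))$; the reverse inclusion is immediate from convexity of $X$. The main (mild) obstacle is purely the setup—isolating compactness and the ``extreme-of-face-is-extreme-of-$X$'' lemma—while the inductive engine itself is standard.
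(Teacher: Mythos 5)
The paper offers no proof of this lemma at all: it is stated as a known fact with a citation to Rockafellar, so any complete argument is necessarily a ``different route.'' Your strategy --- induction on the dimension of the affine hull, passing to proper faces cut out by supporting hyperplanes, and using the fact that an extreme point of an exposed face is an extreme point of $X$ --- is the standard proof of Minkowski's theorem (the finite-dimensional Krein--Milman theorem; the label ``Carath\'eodory'' is in fact a slight misnomer, since Carath\'eodory's theorem proper concerns representing points of $\text{conv}(Y)$ by at most $d+1$ points of $Y$). Your preliminary observation that the statement is false without compactness, together with your check that the set to which the paper applies the lemma is a compact polytope, is a genuine and worthwhile correction to the paper's formulation.

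One step fails as written. You claim that for a non-extreme $x$ one may ``pick any line $\ell$ through $x$ lying in the affine hull of $X$'' and that $\ell \cap X$ is then a nondegenerate segment. This is false: let $X = \text{conv}\{(0,0,0),(1,0,0),(0,1,0),(0,0,1)\} \subseteq \mathbb{R}^3$ and let $x = (1/2,0,0)$ be the (non-extreme) midpoint of an edge; the line through $x$ with direction $(0,1,-1)$ lies in the affine hull of $X$ but meets $X$ only at $x$. The repair is immediate: since $x$ is not extreme there are distinct $y,z \in X$ with $x = \alpha y + (1-\alpha)z$ for some $\alpha \in (0,1)$, and you should take $\ell$ to be the line through $y$ and $z$. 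Then $\ell \cap X$ is a compact segment containing $[y,z]$, hence nondegenerate with $x$ in its relative interior, and its endpoints lie on the relative boundary of $X$ as you argue. With that substitution the induction goes through.
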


\subsection{Preliminaries on stochastic rationality and Block-Marschak polynomials} \label{app: BM}
\

The well-known characterization of the RUM is based on the Block-Marschak-polynomials (BM-polynomials). Define for all $A \subseteq X$ and $a \in X$ the BM-polynomial at $a$ in $A$ by: $$\text{BM}_{\rho}(a,A) = \sum_{A \subseteq B \subseteq X} (-1)^{|B\setminus A|} \rho(a,B)$$

\begin{theorem}[\cite{block}, \cite{falmagne1978representation}, \cite{barbera1986falmagne}, \cite{monderer1992stochastic}, \cite{fiorini2004short}]
$\rho$ is a RUM if and only if $\text{BM}_{\rho}(a,A) \geq 0$ for all $A \subseteq X$ and $a \in X$.
\end{theorem}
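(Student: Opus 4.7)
The proof splits into two directions.

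The necessity direction follows from a M\"obius-inversion argument. Assuming $\rho$ is a RUM with distribution $\mu$, for fixed $a \in A$ and each $B \supseteq A$, let $E_B = \{P \in \mathcal{P} : a = \max(P, B)\}$, so that $\rho(a, B) = \mu(E_B)$. These events are nested: $E_{A \cup \{b\}} \subseteq E_A$ for every $b \in X \setminus A$. Applying inclusion--exclusion to the union $\bigcup_{b \in X \setminus A} E_{A \cup \{b\}}$ inside $E_A$ yields
\[
\text{BM}_{\rho}(a,A) \;=\; \mu\!\left( E_A \setminus \bigcup_{b \in X \setminus A} E_{A \cup \{b\}} \right),
\]
which is precisely the probability that, under a preference $P$ drawn from $\mu$, the alternative $a$ sits above every element of $A \setminus \{a\}$ and below every element of $X \setminus A$. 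Since a probability cannot be negative, necessity is immediate.

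The sufficiency direction is the substantive content of the theorem. My plan is to proceed by induction on $|X|$, with the case $|X| = 1$ trivial. For the inductive step, consider the quantities $w_a := \text{BM}_{\rho}(a, \{a\})$ for $a \in X$: these are non-negative by hypothesis, and a telescoping argument using the definition of the BM-polynomials together with $\sum_{b \in B} \rho(b, B) = 1$ gives $\sum_{a \in X} w_a = 1$. Heuristically, $w_a$ ought to be the probability that $a$ is the globally worst alternative under the sought RUM. For each $a$ with $w_a > 0$, I would construct an auxiliary stochastic choice function $\rho^{(a)}$ on $X \setminus \{a\}$ that corresponds to conditioning on the event ``$a$ is worst'', and show that \emph{its} BM-polynomials are non-negative. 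By the inductive hypothesis, each $\rho^{(a)}$ then admits a RUM representation $\mu^{(a)}$ supported on preferences over $X \setminus \{a\}$. Extending each such preference by placing $a$ at the bottom and taking the weighted mixture $\mu = \sum_{a \in X} w_a \mu^{(a)}$ produces the desired RUM representation of $\rho$ on $X$.

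The main obstacle is verifying that the reduced BM-polynomials of $\rho^{(a)}$ are non-negative. This rests on combinatorial identities that express the BM-polynomials of $\rho^{(a)}$ as sums of BM-polynomials of $\rho$ at sets containing $a$, and constitutes the core of Falmagne's original inductive argument; the bookkeeping has been streamlined considerably in later treatments (in particular \cite{fiorini2004short}, which exploits the face structure of the permutahedron to bypass the inductive reduction altogether).
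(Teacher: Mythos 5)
This is one of the results the paper does not prove: it is imported from the literature (Block--Marschak, Falmagne, Barber\`a--Pattanaik, Fiorini), so there is no in-paper argument to compare yours against. Judged on its own terms, your necessity direction is correct and complete: the inclusion--exclusion identity $\text{BM}_{\rho}(a,A)=\mu\bigl(E_A\setminus\bigcup_{b\notin A}E_{A\cup\{b\}}\bigr)$ is exactly the standard argument, and the identification of this event as ``$a$ lies above $A\setminus\{a\}$ and below $X\setminus A$'' is right.

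The sufficiency direction, however, is a roadmap rather than a proof, and the missing step is precisely where the entire difficulty of the theorem lives. Your quantities $w_a=\text{BM}_{\rho}(a,\{a\})$ are fine (non-negative, summing to one by the telescoping computation you indicate), but the object $\rho^{(a)}$ is never constructed: ``conditioning on the event that $a$ is worst'' presupposes a joint distribution over rankings, which is exactly what you are trying to build, and it is not clear that such a conditional system is even determined by the aggregate data $\rho$ (the joint distribution is famously non-identified). You would need an explicit formula for $\rho^{(a)}(b,B)$ in terms of the BM-polynomials of $\rho$, a verification that it is a bona fide stochastic choice function, a proof that its reduced BM-polynomials are non-negative, and a check that the mixture $\sum_a w_a\mu^{(a)}$ reproduces $\rho$. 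You acknowledge this openly and correctly point to Falmagne's induction and Fiorini's polyhedral streamlining, but as written the argument establishes only the easy half of the equivalence; the hard half is deferred to the very sources the theorem statement cites.
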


We now state two facts about BM-polynomials. The first shows that the BM polynomials act as a convex operator on the set of stochastic choice functions. 

\begin{lemma} \label{lemma: convex} Let $\alpha \in (0,1)$ and $\rho,\rho'$ be stochastic choice functions then $$\text{BM}_{\alpha \rho + (1-\alpha) \rho'}(a,A)  = \alpha \text{BM}_{\rho}(a,A) + (1-\alpha) \text{BM}_{\rho'}(a,A)$$ for all $a \in A$ and $A \subseteq X$. \end{lemma}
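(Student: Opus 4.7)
The plan is essentially to observe that for each fixed pair $(a,A)$ the map $\rho \mapsto \text{BM}_{\rho}(a,A)$ is a finite $\mathbb{Z}$-linear combination of the evaluation functionals $\rho \mapsto \rho(a,B)$ as $B$ ranges over supersets of $A$, with signed coefficients $(-1)^{|B \setminus A|}$. Linearity of such a map in $\rho$ is automatic, and convex combinations are a special case of linear combinations, so the identity falls out immediately.

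Concretely, I would substitute $\alpha \rho + (1-\alpha) \rho'$ directly into the defining sum,
$$\text{BM}_{\alpha \rho + (1-\alpha) \rho'}(a,A) = \sum_{A \subseteq B \subseteq X} (-1)^{|B \setminus A|} \bigl[\alpha \rho(a,B) + (1-\alpha) \rho'(a,B)\bigr],$$
distribute $\alpha$ and $1-\alpha$ through each term, and split the summation into two pieces. By the definition of the BM-polynomial, the first piece equals $\alpha \, \text{BM}_{\rho}(a,A)$ and the second equals $(1-\alpha)\, \text{BM}_{\rho'}(a,A)$, which is precisely the claimed equality. The only side remark worth making is that $\alpha \rho + (1-\alpha)\rho'$ is itself a stochastic choice function (the constraints $\sum_{a \in A} \rho(a,A) = 1$ and $\rho(a,A) = 0$ for $a \notin A$ are preserved under convex combinations), so both sides of the identity are well-defined BM-polynomials.

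There is no genuine obstacle here; the lemma is just a restatement of the linearity of the BM operator, and it is recorded separately because it will be invoked later (notably in the proofs of the falsifiability results of Section \ref{sec: falsifiability}, where convex combinations of a non-RUM $\rho_\mu$ with a full-support RUM $\rho^*$ are analyzed via their BM-polynomials).
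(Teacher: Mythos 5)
Your proof is correct and follows essentially the same route as the paper's: substitute the convex combination into the defining sum, distribute, and split into two sums recognized as the respective BM-polynomials. The added remark that the convex combination is itself a stochastic choice function is a harmless bonus not present in the paper's version.
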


\begin{proof} The proof is simple and follows by expanding the definition of a BM polynomial. We have that $$\text{BM}_{\alpha \rho + (1-\alpha) \rho'}(a,A) = \sum_{A \subseteq B \subseteq X} (-1)^{|B\setminus A|} \left[ \alpha \rho(a,B) + (1-\alpha) \rho'(a,B) \right] = $$ $$ \alpha \sum_{A \subseteq B \subseteq X} (-1)^{|B\setminus A|}  \rho(a,B) + (1-\alpha) \sum_{A \subseteq B \subseteq X} (-1)^{|B\setminus A|}  \rho'(a,B)= $$ $$ \alpha \text{BM}_{\rho}(a,A) + (1-\alpha) \text{BM}_{\rho'}(a,A).$$ \end{proof}

The second shows that RUMs with full support are characterized by strictly positive BM-polynomials.

\begin{lemma}  A RUM is full support if and only if $\text{BM}_{\rho_{\mu}}(a,A)>0$ for all $a \in A$ and $A \subseteq X$. \end{lemma}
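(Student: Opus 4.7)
The plan is to reduce both directions to the single identity
$$ \text{BM}_{\rho_\mu}(a,A) \;=\; \mu\bigl(\{P \in \mathcal{P} : a = \max(P,A) \text{ and } c \succ_P a \text{ for every } c \in X \setminus A\}\bigr), $$
which I would establish via the inclusion--exclusion argument underlying the Block--Marschak--Falmagne calculation. Writing $E_B = \{P : a \succ_P b \text{ for all } b \in B \setminus \{a\}\}$, so that $\rho_\mu(a,B) = \mu(E_B)$, and using $E_{A \cup S} = E_A \cap \bigcap_{c \in S}\{P : a \succ_P c\}$ for $S \subseteq X \setminus A$, the alternating sum $\sum_{S \subseteq X \setminus A}(-1)^{|S|}\mu(E_{A \cup S})$ collapses by inclusion--exclusion to the $\mu$-mass of the event displayed on the right. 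This identity is the workhorse of the whole argument.

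For the forward direction I would then exhibit, for every pair $(a,A)$, an explicit preference lying in the set on the right-hand side: any $P$ that ranks the elements of $X \setminus A$ at the top (in any order), then $a$, then the elements of $A \setminus \{a\}$ (in any order) belongs to it. Full support of $\mu$ assigns positive mass to such a $P$, so every BM polynomial is strictly positive.

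For the reverse direction I would pass to the RUM polytope $\mathcal{R} = \mathrm{conv}\{\rho_P : P \in \mathcal{P}\}$, which by the theorem stated just above equals $\{\rho : \text{BM}_\rho \ge 0\}$, with relative interior cut out by the strict inequalities $\text{BM}_\rho > 0$. A standard convex-analytic fact then says that any point in the relative interior of the convex hull of a finite set of vectors admits a representation as a convex combination using \emph{all} those vectors with strictly positive weights. Applied to $\rho_\mu$, this produces a full-support distribution over $\mathcal{P}$ that represents it, which is the desired implication.

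The main obstacle I expect is the convex-geometric step in the reverse direction, rather than the BM identity itself. It relies on two ingredients: first, that the hyperplanes $\{\text{BM}_\rho = 0\}$ are precisely the facets of $\mathcal{R}$, so that strict positivity of all BMs characterizes (not merely implies) relative-interior points of the polytope---a non-redundancy statement that goes slightly beyond the bare nonnegativity form of Falmagne's theorem; and second, the fact that every relative-interior point of a polytope expands as a strictly positive convex combination of all its extreme points. Both are standard, but together they are what bridges "all BMs positive" to the existence of a full-support representation asserted by the lemma.
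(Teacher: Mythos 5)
Your argument is correct, but it is worth knowing that the paper does not actually prove this lemma: it simply cites the discussion in Turansick (2022, p.~7), so your write-up is a genuinely self-contained substitute rather than a parallel of the paper's proof. The engine of your argument, the identity $\text{BM}_{\rho_\mu}(a,A) = \mu\bigl(\{P : c \succ_P a \text{ for all } c \in X\setminus A \text{ and } a \succ_P b \text{ for all } b \in A\setminus\{a\}\}\bigr)$, is the standard Falmagne computation and your inclusion--exclusion derivation of it is sound; the forward direction then follows immediately since the displayed event is nonempty for every pair $(a,A)$. Two remarks on the reverse direction. First, you are slightly over-cautious: you do not need the non-redundancy/facet claim at all. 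The set $\{\rho \in \mathrm{aff}(\mathcal{R}) : \text{BM}_\rho > 0\}$ is relatively open and contained in $\mathcal{R}$ by Falmagne's theorem, hence contained in $\mathrm{relint}(\mathcal{R})$; combined with the fact that a relative-interior point of a polytope is a strictly positive convex combination of \emph{all} its vertices, this already yields a full-support representation. Whether each BM inequality is facet-defining is irrelevant for this implication. Second, note that your reverse direction proves that $\rho_\mu$ \emph{admits} a full-support representation, not that the given $\mu$ has full support --- and this is the only defensible reading of the lemma, since for $|X| \geq 4$ non-uniqueness of RUM representations means a $\rho$ with all BM polynomials strictly positive can also be represented by a distribution without full support. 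The paper only ever uses the forward direction (in the proof of its falsifiability proposition), so this ambiguity is harmless there, but your proof resolves it in the correct way.
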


\begin{proof} This result follows from the discussion in \cite[p.7]{turansick}  \end{proof}

\subsection{Preliminaries on Fréchet bounds} \label{app: frechet}

The Fr\'echet bounds govern the relationship between joint and marginal distributions. Specifically, they come as bounds on the joint distribution given the marginals. In a general setting, the Fr\'echet  bounds are obtained as follows.

\begin{lemma} \label{lemma: frechet}  Let $\Omega$ be a sample space and $P:\Omega \to [0,1]$ a probability measure. For any finite collection of events $E_1,...,E_K \subseteq \Omega$ it then holds that $$\max\{0,  \sum_{k=1}^K P(E_k) - (K-1)\} \leq P(\cap_{k=1}^K E_k) .$$  \end{lemma}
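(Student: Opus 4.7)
The plan is to prove the two inequalities in the maximum separately, so that the substantive content reduces to the bound $P(\cap_{k=1}^K E_k) \geq \sum_{k=1}^K P(E_k) - (K-1)$; the other bound $0 \leq P(\cap_{k=1}^K E_k)$ is immediate from the non-negativity of any probability measure.

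For the nontrivial bound, the natural approach is to pass to complements via De Morgan's law and then apply Boole's inequality (finite subadditivity of $P$). First I would write
\[
P\!\left(\bigcap_{k=1}^K E_k\right) = 1 - P\!\left(\bigcup_{k=1}^K E_k^c\right).
\]
Then, by finite subadditivity,
\[
P\!\left(\bigcup_{k=1}^K E_k^c\right) \leq \sum_{k=1}^K P(E_k^c) = \sum_{k=1}^K \bigl(1 - P(E_k)\bigr) = K - \sum_{k=1}^K P(E_k).
\]
Substituting this estimate into the previous identity yields
\[
P\!\left(\bigcap_{k=1}^K E_k\right) \geq 1 - \Bigl(K - \sum_{k=1}^K P(E_k)\Bigr) = \sum_{k=1}^K P(E_k) - (K-1),
\]
which combined with $P(\cap_{k=1}^K E_k) \geq 0$ delivers the claim.

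There is no real obstacle here; the only ingredients are De Morgan's law and Boole's inequality, both of which hold for any probability measure on any sample space. An equivalent and perhaps more elementary route would be to argue pointwise on indicator functions, noting that $\mathbf{1}_{\cap_k E_k}(\omega) \geq \sum_k \mathbf{1}_{E_k}(\omega) - (K-1)$ holds for every $\omega \in \Omega$ (since the right-hand side is at most $0$ unless $\omega$ lies in every $E_k$) and then integrating both sides against $P$; either route produces the same bound in one line.
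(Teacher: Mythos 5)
Your proof is correct, but it takes a different route from the paper's. The paper establishes the two-event case via inclusion--exclusion, writing $P(E_1 \cap E_2) = P(E_1) + P(E_2) - P(E_1 \cup E_2) \geq P(E_1) + P(E_2) - 1$, and then appeals to an induction argument (which it only sketches) to extend to $K$ events. You instead pass to complements via De Morgan's law and invoke Boole's inequality, which handles all $K$ events in a single step with no induction; your alternative pointwise bound on indicator functions, $\mathbf{1}_{\cap_k E_k}(\omega) \geq \sum_k \mathbf{1}_{E_k}(\omega) - (K-1)$, is likewise a one-shot argument and is arguably the most transparent of the three. What the paper's route buys is that the base case makes the ``correlation'' interpretation vivid (the slack in the bound is exactly $1 - P(E_1 \cup E_2)$, the mass escaping the union), which is the intuition the authors lean on throughout; what your route buys is a complete, self-contained proof with no inductive step left to the reader. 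Both are sound.
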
 

The Fréchet bounds follow from the basic laws of probability and an induction argument.\footnote{More specifically, they follow by noting that for any two events $E_1, E_2$ it holds that $P(E_1 \cap E_2) = P(E_1) + P(E_2) - P (E_1 \cup E_2)$ which implies that  $P(E_1 \cap E_2) = P(E_1) + P(E_2) - P (E_1 \cup E_2) \geq P(E_1) + P(E_2) - 1$ and since also $P(E_1 \cap E_2) \geq 0$ we have $P(E_1 \cap E_2) \geq \max\{0, P(E_1) + P(E_2) - 1\}$. An induction argument then completes the proof.} In RCMs, $\mu$ is the joint distribution while $\rho_{\mu}$ is the marginal distribution, and the operation $\rho(c(A),A) =\mu(\{c' \in \mathcal{C}: c'(A) = c(A)\})$ is nothing more than a marginalization. The following result is hence a direct corollary to Lemma \ref{lemma: frechet}.  

\begin{lemma} \label{lemma: strict1} Let $\rho$ be a stochastic choice function, $c$ be a choice function, and $\mathcal{B}$ be an arbitrary collection of choice sets. It then holds that $$ \max\{0,  \sum_{A \in \mathcal{B}} \rho(c(A),A) - (|\mathcal{B}|-1)\} \leq  \mu( \cap_{A \in \mathcal{B} } \{c' \in \mathcal{C}: c'(A)=c(A) \} )$$ for all RCMs $\mu$ such that $\rho= \rho_{\mu}$. \end{lemma}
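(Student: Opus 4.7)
The plan is to recognize that Lemma~\ref{lemma: strict1} is a direct corollary of the general Fréchet bound stated in Lemma~\ref{lemma: frechet}, applied with the sample space $\Omega = \mathcal{C}$, the probability measure $\mu$, and a well-chosen finite collection of events indexed by the menus in $\mathcal{B}$.

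First, for each menu $A \in \mathcal{B}$, I would define the event
\[
E_A \;=\; \{\, c' \in \mathcal{C} : c'(A) = c(A) \,\} \;=\; \mathcal{C}(c(A), A).
\]
These are precisely the sets whose intersection appears on the right-hand side of the claimed inequality. The next step is to translate the terms $\rho(c(A),A)$ on the left-hand side into the language of $\mu$. Since by hypothesis $\rho = \rho_{\mu}$, the definition of the induced stochastic choice function in Section~\ref{sec: preliminaries} gives
\[
\rho(c(A), A) \;=\; \rho_{\mu}(c(A), A) \;=\; \mu\bigl(\mathcal{C}(c(A), A)\bigr) \;=\; \mu(E_A)
\]
for every $A \in \mathcal{B}$.

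Now I would apply Lemma~\ref{lemma: frechet} to the finite family $\{E_A\}_{A \in \mathcal{B}}$ with $K = |\mathcal{B}|$, which yields
\[
\max\Bigl\{0, \sum_{A \in \mathcal{B}} \mu(E_A) - (|\mathcal{B}| - 1)\Bigr\} \;\leq\; \mu\Bigl(\bigcap_{A \in \mathcal{B}} E_A\Bigr).
\]
Substituting $\mu(E_A) = \rho(c(A), A)$ and unfolding the definition of $E_A$ on the right produces exactly the desired bound. Since $\mu$ was an arbitrary RCM with $\rho = \rho_{\mu}$, the inequality holds uniformly over all such $\mu$.

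There is essentially no obstacle here: the content of the lemma is entirely contained in Lemma~\ref{lemma: frechet}, and the only work is the notational observation that marginalizing the RCM $\mu$ over the event ``the realized choice function selects $c(A)$ from $A$'' reproduces the stochastic choice probability $\rho(c(A),A)$. The lower bound $0$ in the $\max$ is automatic from $\mu(\cap_A E_A) \geq 0$, while the other term is the nontrivial part inherited from the Fréchet inequality.
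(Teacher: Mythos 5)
Your proposal is correct and matches the paper's own argument: the paper likewise treats Lemma \ref{lemma: strict1} as a direct corollary of Lemma \ref{lemma: frechet}, obtained by taking $\Omega = \mathcal{C}$ with measure $\mu$, the events $E_A = \{c' \in \mathcal{C} : c'(A) = c(A)\}$, and the marginalization identity $\rho(c(A),A) = \mu(E_A)$. Nothing is missing.
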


It should be noted that the lefthand side of the above inequality is independent of the RCM $\mu$ that represents $\rho$. For our purposes, it is also instructive to consider the special case of Lemma \ref{lemma: strict1} when $\mathcal{B}$ is the collection of all non-empty and non-singleton menus of $X$, i.e. when $\mathcal{B} =\{A \subseteq X: |A| \geq 2\}$ and $K = \vert \mathcal{B} \vert$. Then, since $\cap_{A \in \mathcal{B} } \{c' \in \mathcal{C}: c'(A)=c(A) \} = \{c\}$ we have the following result.  

\begin{lemma}\label{lemma: full}  Let $c$ be a choice function then it holds that $$\max\{0,  \sum_{A \in \mathcal{B}} \rho(c(A),A) - (K-1)\} \leq  \mu(c)$$  for all RCMs $\mu$ such that $\rho= \rho_{\mu}$.\end{lemma}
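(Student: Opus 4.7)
The plan is to deduce Lemma \ref{lemma: full} as an immediate specialization of Lemma \ref{lemma: strict1}. Concretely, I will instantiate the earlier lemma with $\mathcal{B}$ chosen to be the collection of \emph{all} non-singleton menus of $X$, i.e.\ $\mathcal{B} = \{A \subseteq X : |A| \geq 2\}$, so that $|\mathcal{B}| = K$ by definition. With this choice, the lower bound $\max\{0, \sum_{A \in \mathcal{B}} \rho(c(A),A) - (K-1)\}$ appearing in Lemma \ref{lemma: strict1} coincides verbatim with the lower bound in the target inequality.

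The only step requiring any verification is identifying the right-hand side. I would check that
\[
\bigcap_{A \in \mathcal{B}} \{c' \in \mathcal{C} : c'(A) = c(A)\} \;=\; \{c\}.
\]
The inclusion $\supseteq$ is immediate. For $\subseteq$, suppose $c'$ lies in every set on the left; then $c'(A) = c(A)$ for every non-singleton menu $A$. Since a choice function is fully determined by its values on non-singleton menus (values on singletons, if part of the domain, are forced), this forces $c' = c$. Consequently, for any RCM $\mu$ with $\rho = \rho_{\mu}$,
\[
\mu\!\left(\bigcap_{A \in \mathcal{B}} \{c' \in \mathcal{C} : c'(A) = c(A)\}\right) \;=\; \mu(\{c\}) \;=\; \mu(c).
\]

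Substituting this into the conclusion of Lemma \ref{lemma: strict1} yields exactly
\[
\max\!\left\{0,\; \sum_{A \in \mathcal{B}} \rho(c(A),A) - (K-1)\right\} \;\leq\; \mu(c),
\]
which is the claim. There is no real obstacle here; the argument is a direct corollary of the general Fréchet-type bound already recorded in Lemma \ref{lemma: strict1}, and the only (very mild) subtlety is the observation that intersecting the events ``$c'$ agrees with $c$ on $A$'' across \emph{all} non-singleton menus collapses to the singleton event $\{c\}$, so that the upper Fréchet side becomes the mass $\mu(c)$ itself.
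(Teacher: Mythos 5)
Your proposal is correct and follows exactly the paper's own route: the paper obtains Lemma \ref{lemma: full} as the special case of Lemma \ref{lemma: strict1} with $\mathcal{B} = \{A \subseteq X : |A| \geq 2\}$, observing that $\cap_{A \in \mathcal{B}} \{c' \in \mathcal{C} : c'(A) = c(A)\} = \{c\}$ so the right-hand side becomes $\mu(c)$. Your additional remark that a choice function is determined by its values on non-singleton menus is the (implicit) justification the paper relies on for that collapse, so nothing is missing.
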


By this formulation of the Fréchet bounds it is clear that if the left-hand side of the inequality is strictly positive, then any $\mu$ that represents $\rho$ must assign (strictly) positive probability to the choice function on the right-hand side of the inequality. In particular, if $c$ is rationalizable then any RCM $\mu$ must assign positive probability to $c$, implying that $\rho_{\mu}$ is not an I-RUM. Equivalently, if a RUM has an irrational representation, so that $\mu(P)=0$ for all preferences $P$, then Lemma \ref{lemma: full} implies the following bounds.

\vspace{5mm}

\noindent \textbf{Weak correlation bounds.} A stochastic choice function $\rho$ satisfies the \emph{weak correlation bounds} if for all preferences $P$ it holds that $$ \sum_{A \subseteq X : |A| \geq 2} \rho(c_P(A),A) - [K-1]  \leq 0.$$ 

Although Lemma \ref{lemma: full} implies that the weak correlation bounds are necessary for an irrational representation, they are not quite sufficient. The following example illustrates. 

\vspace{5mm}

\begin{innercustomthm}
Let $X=abc$ and let $P_1,P_2,P_3,P_4$ be preferences on $X$ s.t. $a P_1 b P_1  c$, $a P_2  c P_2  b$ and $b P_3  c P_3  a$, $c P_4  b P_4  a$. Let $\mu$ be a RUM with  $\mu(P_1) = \mu(P_2) = 0.4$ and $\mu(P_3) = \mu(P_4) = 0.1$. The following $\rho_\mu$ arises: $\rho_{\mu}(a,X) = \rho_{\mu}(a,ab) = \rho_{\mu}(a,ac) = 0.8$, $\rho_{\mu}(b,bc) = 0.5$. The two choice functions $c_{P_1}, c_{P_2}$ related to $P_1, P_2$ are salient and they both satisfy the weak correlation bounds 
$$\sum_{A \subseteq X: |A| \geq 2} \rho(c_{P_1}(A),A) = \sum_{A \subseteq X: |A| \geq 2} \rho(c_{P_2}(A),A) = 2.9 < 3 = K - 1.$$
However, one can easily see that if $\mu'$ is an RCM with $\mu'(c_{P_1}) = \mu'(c_{P_2}) = 0$ then $\rho$ cannot be represented by $\mu'$ (i.e. $\rho \neq \rho_{\mu'}$).
\end{innercustomthm}

The problem in the above example is that $P_1$ and $P_2$ are close to being the same preference since they agree on the choices from all menus except  $\{b,c\}$ (which contains the worst two alternatives of both orders). Moreover, a large fraction of the overall probability is assigned to $P_1$ and $P_2$ (i.e. probability $0.8$) and hence the stochastic choice function $\rho_{\mu}$ exhibits a high degree of correlation. It turns out that a slight variation of these bounds, namely the correlation bounds of section \ref{sec: main}, are sufficient for an irrational representation of RUM.  

\vspace{5mm}

\noindent \textbf{Relation to Kendall rank correlation.} Before providing the proof of Theorem \ref{thm: main}, we would like to briefly motivate the use of the word "correlation" which intends a measure of concordance between the choices of the decision-makers in the population w.r.t. to a specific choice function. To see this, note that, w.r.t to a choice function $c$, we can split the probability mass in $\rho$ between the one that "correlates" and "uncorrelates" with $c$.
$$\bigg[ \underbrace{\sum_{A \subseteq X: |A| \geq 2} \rho(c(A), A)}_\text{"correlation" $(\hat{\mathbb{C}}^{\rho}_{c})$} + \underbrace{\sum_{A \subseteq X: |A| \geq 2} \rho(A \setminus c(A), A)}_\text{"uncorrelation" $(\hat{\mathbb{U}}^{\rho}_{c})$} \bigg] = K-1$$
Finally, we note that $\frac{1}{  K-1}\hat{\mathbb{C}}^{\rho}_{c}, \frac{1}{  K-1}\hat{\mathbb{U}}^{\rho}_{c} \in [0,1]$ and the equation above implies $\frac{1}{  K-1}[\hat{\mathbb{C}}^{\rho}_{c} - \hat{\mathbb{U}}^{\rho}_{c}] \in [-1, 1]$. An intuitive "correlation" interpretation comes after noticing the relationship with the notion of Kendall rank correlation as $\hat{\mathbb{C}}^{\rho}_{c}$ is the probability mass on the choices that are concordant with $c$ while $\hat{\mathbb{U}}^{\rho}_{c}$ is the probability mass on the choices that are discordant with $c$, and $\frac{1}{K-1}$ is the normalization factor.

\section{Proof of Theorem \ref{thm: main}}  \label{app: main}

\subsection{The necessity of the correlation bounds} \label{sec: nproof}
The necessity of the correlation bounds follows by an application of the Fréchet bounds in Lemma \ref{lemma: strict1}. Recall that $\mathcal{A}(P)= \{A \subseteq X: |A| \geq 2\} \setminus \{\{x_{P}(N-1),x_{P}(N)\}\}$ and $| \mathcal{A}(P) | = K-1$ for all preferences $P$. The following lemma is a special case of Lemma \ref{lemma: strict1} and it follows by noting that for each preference $P$ we have $\cap_{A \in \mathcal{A}(P)} \{c \in \mathcal{C}: c(A)=c_P(A)\} = \{c_P,c_{Q}\}$ where $Q$ agrees with $P$ on all the menus in $\mathcal{A}(P)$. 

\begin{lemma} \label{lemma: strict} Let $\rho$ be a stochastic choice function, $P$ be a preference, and $Q$ be a preference that agrees with $P$ on all menus in $\mathcal{A}(P)$. It then holds that $$ \max\{0,  \sum_{A \in \mathcal{A}(P)} \rho(c_P(A),A) - (K-2)\} \leq  \mu( \{c_P,c_Q\} )$$ for all RCMs $\mu$ such that $\rho= \rho_{\mu}$. \end{lemma}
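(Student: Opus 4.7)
The plan is to invoke Lemma \ref{lemma: strict1} with the specific choice $c = c_P$ and $\mathcal{B} = \mathcal{A}(P)$. Since $|\mathcal{A}(P)| = K - 1$, the term $(|\mathcal{B}| - 1)$ appearing in Lemma \ref{lemma: strict1} becomes $K - 2$, matching the left-hand side of the inequality we need. The right-hand side of Lemma \ref{lemma: strict1} in this instance is $\mu\bigl(\cap_{A \in \mathcal{A}(P)} \{c' \in \mathcal{C} : c'(A) = c_P(A)\}\bigr)$, so the entire proof reduces to identifying this intersection with $\{c_P, c_Q\}$.

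The key observation is that $\mathcal{A}(P)$ contains every non-singleton menu of $X$ \emph{except} the single menu $\{x_P(N-1), x_P(N)\}$. Consequently, a choice function $c'$ lies in the intersection if and only if $c'(A) = c_P(A)$ for every non-singleton menu $A$ other than $\{x_P(N-1), x_P(N)\}$; on that single excluded menu, $c'$ is free to pick either of its two elements. This freedom yields exactly two choice functions in the intersection: $c_P$ itself, and the choice function that agrees with $c_P$ everywhere except on $\{x_P(N-1), x_P(N)\}$, where it instead selects $x_P(N)$.

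Next I would argue that this second choice function is precisely $c_Q$. Since $Q$ agrees with $P$ on all menus in $\mathcal{A}(P)$, the $Q$-maximizer coincides with the $P$-maximizer on each of these menus, so $c_Q(A) = c_P(A)$ for all $A \in \mathcal{A}(P)$. If $Q = P$ then the intersection is the singleton $\{c_P\} \subseteq \{c_P, c_Q\}$ and the conclusion is trivial; otherwise $Q$ differs from $P$ only by swapping the ranking of $x_P(N-1)$ and $x_P(N)$, so that $c_Q(\{x_P(N-1), x_P(N)\}) = x_P(N)$, exhibiting $c_Q$ as the second choice function.

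With the identification $\cap_{A \in \mathcal{A}(P)} \{c' \in \mathcal{C} : c'(A) = c_P(A)\} = \{c_P, c_Q\}$ in hand, the right-hand side of Lemma \ref{lemma: strict1} becomes $\mu(\{c_P, c_Q\})$, and the desired bound follows immediately. I do not anticipate any real obstacle here: the argument is a direct specialization of the general Fréchet inequality, with the only substantive content being the combinatorial observation that excluding the menu $\{x_P(N-1), x_P(N)\}$ from the intersection leaves exactly two compatible choice functions, corresponding to the two strict linear orders that agree with $P$ on $\mathcal{A}(P)$.
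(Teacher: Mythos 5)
Your proposal is correct and follows essentially the same route as the paper: the paper likewise obtains the lemma as the special case of Lemma \ref{lemma: strict1} with $\mathcal{B} = \mathcal{A}(P)$, resting on the observation that $\cap_{A \in \mathcal{A}(P)} \{c' \in \mathcal{C} : c'(A) = c_P(A)\} = \{c_P, c_Q\}$. You merely spell out in more detail why excluding the single menu $\{x_P(N-1), x_P(N)\}$ leaves exactly two compatible choice functions, which the paper leaves implicit.
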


Since, any I-RUM $\mu$ puts zero probability on all the rational choice functions, i.e. $\mu(c_P)=0$ for all preferences $P$, Lemma \ref{lemma: strict} implies that any I-RUM must satisfy the following inequalities for all preferences $P$:  

$$\mathbb{C}^{\rho}_{P} = \sum_{A \in \mathcal{A}(P)} \rho(c_{P}(A),A) - [K-2]  \leq 0,$$

which are exactly the correlation bounds. The preceding discussion thus implies that the correlation bounds are necessary for an I-RUM representation.

\subsection{The sufficiency of the correlation bounds} 

We refer to the main text for an exposition and sketch of the general idea behind the proof. Let $\rho$ be a RUM with distribution $\mu$. We say that $\rho$ \emph{satisfies the correlation bounds with strict inequality} if $\mathbb{C}^{\rho}_{P} < 0$ for all preferences $P$. Similarly we say that $\rho$ \emph{satisfies the correlation bounds with equality} if $\mathbb{C}^{\rho}_{P} = 0$ for some preference $P$ and $\mathbb{C}^{\rho}_{P'}  \leq 0$ for all preferences $P'$. We will sometimes also say that RCM $\mu$ satisfies the correlation bounds (with equality) if $\rho_{\mu}$ satisfies the correlation bounds (with equality).

\begin{lemma} The subset of SCFs that satisfy the correlation bounds is a convex set. I.e. the set of all SCFs $\rho$ such that $$\mathbb{C}^{\rho}_{P} \leq 0$$ for all preferences $P$ is convex. \end{lemma}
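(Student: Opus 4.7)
The plan is to verify convexity directly from the two facts that (i) the set of RUMs is itself convex and (ii) the quantity $\mathbb{C}^{\rho}_{P}$ depends linearly (in fact, affinely in a particularly transparent way) on $\rho$. Given these, the claim will reduce to a one-line bookkeeping check.

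First I would take two RUMs $\rho_{1}, \rho_{2}$ with associated distributions $\mu_{1}, \mu_{2}$ on $\mathcal{P}$, both satisfying $\mathbb{C}^{\rho_{i}}_{P} \leq 1$ for every preference $P$, and fix $\alpha \in (0,1)$. To see that $\alpha \rho_{1} + (1-\alpha)\rho_{2}$ is itself a RUM, I would point out that the mixture distribution $\alpha \mu_{1} + (1-\alpha) \mu_{2}$ is a probability distribution on $\mathcal{P}$ and induces exactly the stochastic choice function $\alpha \rho_{1} + (1-\alpha)\rho_{2}$ under the RUM representation map $\mu \mapsto \rho_{\mu}$. (This is the standard fact that the set of RUMs is convex; it mirrors the argument for Lemma~\ref{lemma: convex} and can be derived in one line from the definition $\rho_{\mu}(a,A) = \mu(\mathcal{C}(a,A))$.)

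Next I would establish linearity of the correlation functional in $\rho$. From the definition
$$\mathbb{C}^{\rho}_{P} \;=\; \frac{1}{K-2} \sum_{A \in \mathcal{A}(P)} \rho(c_{P}(A), A),$$
it is immediate that for any two stochastic choice functions $\rho_{1}, \rho_{2}$ and any $\alpha \in [0,1]$,
$$\mathbb{C}^{\alpha \rho_{1} + (1-\alpha)\rho_{2}}_{P} \;=\; \alpha\, \mathbb{C}^{\rho_{1}}_{P} + (1-\alpha)\, \mathbb{C}^{\rho_{2}}_{P},$$
since the summation and the evaluation $\rho \mapsto \rho(c_{P}(A),A)$ are both linear operations.

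Finally, combining the two observations, if both $\rho_{1}$ and $\rho_{2}$ satisfy the correlation bounds, then for every preference $P$,
$$\mathbb{C}^{\alpha \rho_{1} + (1-\alpha)\rho_{2}}_{P} \;=\; \alpha\, \mathbb{C}^{\rho_{1}}_{P} + (1-\alpha)\, \mathbb{C}^{\rho_{2}}_{P} \;\leq\; \alpha + (1-\alpha) \;=\; 1,$$
so the mixture lies in the set. There is essentially no obstacle here: the only thing to be careful about is not to confuse convexity of the set of RUMs (a statement about mixing $\mu$'s) with convexity of the bound constraint (a statement about linearity in $\rho$); I would write the proof in a way that keeps these two ingredients cleanly separated.
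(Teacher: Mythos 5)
Your proof is correct and takes essentially the same approach as the paper: both arguments reduce the claim to the observation that $\mathbb{C}^{\rho}_{P}$ is a linear functional, so each correlation bound cuts out a half-space and the intersection with the convex set of RUMs is convex. The only cosmetic difference is that you verify linearity directly in $\rho$ from the definition of $\mathbb{C}^{\rho}_{P}$, whereas the paper first rewrites the bound in preference space via Lemma \ref{lemma: Frum} and checks linearity in $\mu$; your route is, if anything, slightly more direct, and you are more explicit than the paper about why the mixture is itself a RUM.
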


\begin{proof} 
The proof is straightforward and follows by expanding the definition of correlation bounds.
\end{proof}

\begin{lemma} \label{lemma: dual} Let $\rho_{\mu}$ be a Dual RCM such that $\mu(P)> 0$ and $C^{\rho}_P=0$ for some preference $P$, then $\rho$ is an RNUM. \end{lemma}

\begin{proof} Let $\rho_{\mu}$ be a Dual RCM such that $\mu(P)> 0$ and $C^{\rho}_P=0$ for some preference $P$. Thus, $\rho$ has support $\{c,P\}$ where $c \in \mathcal{C} \setminus \{P\}$. Since $C^{\rho}_P=0$ it follows that $\mu(c)n(P,c) + \mu(P)n(P,P)=K-2$. Rearranging the latter equality it follows that $\mu(P) = \frac{K- 2 - n(P,c) }{K -1 -n(P,c)}$. Let $k \in \mathbb{N}$ be  such that $k \mu(c) = \mu(P)$. I.e. $k= K- 2- n(P,c) $. Let the set $\mathcal{D}(P)$ be defined as $\mathcal{D}(P)= \{A\in \mathcal{A}(P) : c_{P}(A) \neq c(A)\}$ (which has cardinality $k+1$) For each $B \in \mathcal{D}(P)$ define a choice function $c_B$ by 

$$ c_{B}(A) =
\begin{cases} 
c(A) & \text{if $A = B$} \\
c_{P}(A) & \text{if $A  \neq  B$} \\
\end{cases}
$$

\vspace{5mm}

\noindent We next show that $c_B$ is irrational for all $B \in \mathcal{D}(P)$. Let $B \in \mathcal{D}(P)$. Since $B \neq \{x_{P}(N-1),x_{P}(N)\}$, it follows that $x_{P}(K) \in B$ for some $K <N-1$.\footnote{For a preference $P$ and for each $i \in \{1,...,n\}$ we define $x_{P}(i)$ as the alternative ranked at position $i$ according to $P$, i.e. $|\{x \in X : xPx_P(i)\}| =i-1$.} Let $K^*$ be the smallest such $K$. If $B= \{x_{P}(K^*),...,x_{P}(N)\}$. Then $c_B(B) =c(B) \neq c_{P}(B) = x_{P}(K^*)$ but $c_B(\{x_{P}(K^*),x_{P}(K^*+1)\})=c_{P}(\{x_{P}(K^*),x_{P}(K^*+1)\}) =x_{P}(K^*)$. A violation of Sen's property $\alpha$.  If $B \subset \{x_{P}(K^*),...,x_{P_2}(N)\}$ then $c_B(\{x_{P}(K^*),...,x_{P}(N)\})=c_{P}(\{x_{P}(K^*),...,x_{P}(N)\}) = x_{P}(K^*)$ and $c_B(B) =c(B) \neq c_{P}(B) = x_{P}(K^*)$. A violation of Sen's property $\alpha$. 

\vspace{5mm}

\noindent Let $\mu'$ be the uniform distribution on $\{c_A : A \in \mathcal{D}(P)\}$, i.e.$$\mu'(c_A) =\frac{1}{k+1}$$ for all $A \in \mathcal{D}(P)$. To show that $\mu$ is an RNUM, it remains to show that the stochastic choices generated by $\mu$ and $\mu'$ are the same. There are two cases. 

\vspace{5mm}

\noindent Case 1: If $A \neq B$ for all $B \in \mathcal{D}(P)$ then $c_B(A) =c_{P}(A) = c(A)$ for all $B \in \mathcal{D}(P)$ and it is clear that the stochastic choices generated by $\mu$ and $\mu'$ are the same (i.e. that $\rho_{\mu}(\cdot,A)=\rho_{\mu'}(\cdot,A)$).
\vspace{5mm}

\noindent Case 2: If $A= B$ for some $B \in \mathcal{D}(P)$. Then $c_B(A) = c(A)$ and $c_{B'}(A) = c_{P_2}(A)$ for all $B' \in \mathcal{D}(P) \setminus \{B\}$. It hence follows that 
 
 $$\sum_{B' \in \mathcal{D}(P)} \mu'(c_{B'}) \bold{1}\{a =c_{B'}(A)\} =$$  $$=\frac{1}{k+1}\bold{1}\{a =c_B(A)\}  +  \sum_{B' \in \mathcal{D}(P) \setminus \{B\}} \frac{1}{k+1} \bold{1}\{a =c_{B'}(A)\}  = $$   $$=\frac{1}{k+1} \bold{1}\{a =c_B(A)\} + \frac{k}{k+1} \bold{1}\{a =c_{P}(A)\} $$ 
 
 $$ =\mu(c)\bold{1}\{a =c(A)\} + \mu(P) \bold{1}\{a =c_{P}(A)\},$$ where the last equality follows since $k\mu(c) = \mu(P)$ implies that $\mu(c)=\frac{1}{k+1}$ and $\mu(P)=\frac{k}{k+1}$. \end{proof}

\begin{lemma} \label{lemma: correq}
Let $\mu$ be an RCM such that the correlation bounds are satisfied with equality. Then there are Dual RCMs $(\hat{\mu_i})_{i=1}^{n}$ such that 

\begin{enumerate}

\item \label{lemma: eq1} $\mu$ is a convex combination of the $(\hat{\mu_i})_{i=1}^{n}$, i.e. there are weights $\delta_i$ s.t. $\mu = \sum_{k=1}^n \delta_i \hat{\mu_i}$,

\item \label{lemma: eq2} each Dual RCM $\hat{\mu_i}$ satisfies the correlation bounds with equality.  

\end{enumerate}

\end{lemma}

In the proceeding proof, we will make use of a special class of RCMs that we call \emph{almost dual RCMs}. An almost dual RCM is an RCM $\mu$ such that $\text{supp}\, {\mu} \subseteq \{P,Q,c\}$ for some twin preferences $P,Q \in \mathcal{P}$ and choice function $c \in \mathcal{C}$. The proof is split into two steps. STEP 1 shows that $\mu$ is a convex combination of almost dual RCMs $\mu_i$ where each almost dual RCM satisfies the bounds with equality. STEP 2 then shows that every almost dual RCM that satisfies the correlation bounds with equality is a convex combination of dual RCMs that satisfy the correlation bounds with equality. 

\begin{proof} Let $\mu$ be an RCM that satisfies the correlation bounds with equality.  W.l.o.g. assume that the correlation bound is satisfied with equality at $P$ (i.e. we have $ K - 2 = \sum\limits_{c \in \mathcal{C}} n(P, c) \mu(c)$). Let $Q$ be the "twin"-preference of $P$. Note that $n(P,c)=n(Q,c)$  for all $c \in \mathcal{C}$ and that the correlation bound is also satisfied with equality at $Q$. 

\vspace{5mm}

\noindent STEP 1: We next construct a series of almost dual RCMs  $(\hat{\mu}_i)_{i=1}^n$  such that $\mu$ is a convex combination of $(\hat{\mu}_i)_{i=1}^n$ and such that each $\hat{\mu_i}$ satisfies the correlation bounds with equality. 

\vspace{5mm}

 \noindent For each $c \in \mathcal{C} \setminus \{P,Q\}$ we define an almost dual RCM $\hat{\mu}_c$ with support on $\{P,Q,c\}$ such that the correlation bounds are satisfied with equality 

$$ \hat{\mu}_c(P) =\frac{\mu(P)}{\mu(P)+\mu(Q)} \left[ \frac{K - 2 - n(P, c)}{K -1- n(P,c)} \right]$$
$$ \hat{\mu}_c(Q) =\frac{\mu(Q)}{\mu(P)+\mu(Q)} \left[ \frac{K - 2 - n(P, c)}{K -1- n(P,c)} \right]$$
$$\hat{\mu}_c(c)=\hat{\mu}^P_c(c) = \frac{1}{K -1- n(P,c)}.$$ 

\vspace{5mm}

\noindent Note that the probabilities above are well-defined since $n(P,c) < K-1$ for all $c \in \mathcal{C} \setminus \{P,Q\}$

\vspace{5mm}

\noindent A sequence of weights $(\delta_c)_{c \in \mathcal{C} \setminus \{P,Q\}}$  is then defined for each $c \in \mathcal{C} \setminus \{P,Q\}$ as follows:
 $$ \delta_c =  \mu(c) [K -1- n(P, c)] $$
 \noindent We next show that $\sum_{c \in \mathcal{C} \setminus \{P,Q\}} \delta_c =1$, i.e. that the $\delta_c$ are indeed weights (that $\delta_c \geq 0$ is obvious). We have that 
$$\sum_{c \in \mathcal{C} \setminus \{P,Q\}} \delta_c  = \sum_{c \in \mathcal{C} \setminus \{P,Q\}} \mu(c)[K -1- n(P, c)]  = $$
$$[K-1][1- \mu(P)-\mu(Q)] - \sum_{c \in \mathcal{C} \setminus \{P,Q\}}  n(P,c) \mu(c) =$$
$$1 + K-2 - \sum_{c \in \mathcal{C} } n(P, c) \mu(c) = 1,$$ where the first equality follows by definition of  $\delta_c$, the second equality follows since $\sum_{c \in \mathcal{C} \setminus \{P,Q\}} \mu(c)[K -1]=[K-1][1- \mu(P)-\mu(Q)]$, the third equality follows by noting that $n(P, P)\mu(P)+n(P,Q)\mu(Q) =[K-1] [\mu(P)+\mu(Q)]$ and rearranging, and the final equality follows since the correlation bound holds with equality at $P$. 

\vspace{5mm} 

\noindent It remains to show that $\mu$ is a convex combination of the almost Dual RCMs  $(\hat{\mu}_c)_{ c \in \mathcal{C}\setminus \{P,Q\}}$. It is clear that for all $c \in \mathcal{C} \setminus \{P,Q\}$ it holds that
$$\mu(c) = \delta_c \hat{\mu}_c(c) =\sum_{c' \in \mathcal{C} \setminus \{P,Q\}} \delta_{c'}\hat{\mu}_{c'}(c),$$ where the first equality follows by construction of $\delta_c$ and $\hat{\mu}_c$, and the second equality follows since $\hat{\mu}_{c'}(c)= 0$ for all $c' \notin \{c,P,Q\}$. We, next note that 
$$\sum_{c \in \mathcal{C} \setminus \{P,Q\}} \delta_c \hat{\mu}_c(P)=$$
$$= \frac{\mu(P)}{\mu(P)+\mu(Q)} \sum_{c \in \mathcal{C} \setminus \{P,Q\}} \mu(c) [K -2- n(P, c)] = $$
$$ \frac{\mu(P)}{\mu(P)+\mu(Q)} \left[\sum_{c \in \mathcal{C} \setminus \{P,Q\}} \mu(c) [K -1- n(P, c)] + \mu(P) + \mu(Q)-1 \right] $$ 
$$= \frac{\mu(P)}{\mu(P)+\mu(Q)} [\mu(P)+\mu(Q)]= \mu(P),$$ and similarly it follows that $\mu(Q)=\sum_{c \in \mathcal{C} \setminus \{P,Q\}} \delta_c \hat{\mu}_c(P)$.  

\vspace{5mm}

\noindent STEP 2: We next show that if $\mu$ is an almost dual RCM that satisfies the correlation bounds with equality then it is a convex combination of dual RCMs that satisfy the correlation bounds with equality. Suppose that $\mu$ satisfies the correlation bound with equality at $P$ and that $\mu$ has support $\{P,Q,c\}$ (where $Q$ is the "twin"-preference of $P$).  
We may then define two dual RCMs $\mu_1$ and $\mu_2$ by 
$$\mu_1(P) =\mu_2(Q)= \frac{K - 2 - n(P, c)}{K -1- n(P,c)}$$
$$\mu_1(c)=\mu_2(c) = \frac{1}{K -1- n(P,c)}.$$

\noindent Both $\mu_1,\mu_2$ satisfy the correlation bounds with equality. Further, define weights $\delta_1 = \frac{\mu(P)}{\mu(P)+\mu(Q)}  \mu(c) [K -1- n(P, c)] $ and $\delta_2 =  \frac{\mu(Q)}{\mu(P)+\mu(Q)}  \mu(c) [K -1- n(P, c)]$. It is straightforward to verify that $\delta_1,\delta_2$ are indeed weights and that $\mu = \delta_1 \mu_1+ \delta_2\mu_2$, which completes the proof. \end{proof}

\begin{lemma}  \label{lemma: extreme} Let $\rho$ be an RCM with distribution $\mu$ (so $\rho=\rho_{\mu}$) that is an extreme point of the set of RCMs that satisfies the Correlation bounds. Then $\rho$ is either an RNUM or $\rho$ satisfies the Correlation bounds with equality.  \end{lemma}

\begin{proof} If $\rho$ is an RNUM the proof is complete. Thus, assume that $\rho$ is an RCM with distribution $\mu$ with support on at least one preference $P_0$. We show that if $\rho=\rho_{\mu}$ satisfies the correlation bounds with strict inequality (i.e. $\mathbb{C}^{\rho}_{P}<0$ for all preferences $P$) then $\rho$ is not an extreme point. By assumption, $\mu$ has support of cardinality greater or equal than two (i.e. it cannot have full support on $P_0$). Then there is a choice function $c_1 \in \mathcal{C} \setminus \{P_0\}$  such that $\{c_{P_0},c_1\}$ is contained in the support of $\mu$. Let $\alpha$ be a distribution with support on $c_{P_0}, c_1$ such that the Dual RCM $\rho^d$, defined for all $a \in A \subseteq X$ by $$\rho^d(a,A) = \alpha(c_{P_0}) \bold{1}\{a= c_{P_0}(A)\} + \alpha(c_1) \bold{1}\{a= c_1(A)\},$$ satisfies the correlation bounds with equality at $P_0$ (it could also be constructed to satisfy the correlation bounds with strict inequality, it is immaterial for the argument). For each $\beta \in [0,1)$ such that $\mu(c) - \beta \alpha(c) > 0$ for all $c \in \{c_{P_0},c_1\}$  (or equivalently $\beta < \min\{\frac{\mu(c_{P_0})}{\alpha(c_{P_0})},  \frac{\mu(c_1)}{\alpha(c_1)}\}$) define an RCM $\mu^*_{\beta}$  by  

 $$\mu^*_{\beta}(c) =
  \begin{cases} 
  \frac{\mu(c) - \beta \alpha(c) }{1-\beta} & \text{if  $c \in \{c_{P_0},c_1\}$} \\ 
 \frac{\mu(c)}{1 - \beta} & \text{otherwise.}
  \end{cases}
  $$ 
 
 \vspace{5mm}
 
\noindent Let $\rho^*(\beta) = \rho_{\mu^*_{\beta}}$, i.e. $\rho^*(\beta)$ is the RCM stochastic choice function with distribution $\mu^*_{\beta}$. For each preference $P$ define a function $f_P$ by $$f_P(\beta) = \mathbb{P}^{\rho^*(\beta)}_{P}$$ for all $\beta \in \left[0, \min \left\{\frac{\mu(c_{P_0})}{\alpha(c_{P_0})},  \frac{\mu(c_1)}{\alpha(c_1)}\right\} \right)$. First note that $f_P(0)=\mathbb{C}^{\rho^*(0)}_{P}=\mathbb{C}^{\rho}_{P}$ for all preferences $P$. Hence, since the correlation bounds are satisfied with strict inequality we have $f_P(0) < 1$ for all preferences $P$. Define a (vector-valued) function $f(\beta): =(f_P(\beta))_{P \in \mathcal{P}}$. Since each $f_P$ is a continuous function of $\beta$ it follows that $f$ is a continuous function of $\beta$ and hence there is a $\hat{\beta} > 0$ such that $f_P(\beta) < 1$ for all $0 \leq \beta < \hat{\beta}$ and for all preferences $P$. Let $\beta$  be such that $0<\beta < \hat{\beta} <\min \left\{\frac{\mu(c_{P_0})}{\alpha(c_{P_0})},  \frac{\mu(c_1)}{\alpha(c_1)}\right\}$. We then have that $\rho = \beta \rho^d + (1-\beta) \rho^*(\beta)$ and since both $\rho^d$ and $\rho^*(\beta)$ satisfy the correlation bounds, it follows that $\rho$ is not an extreme point (as it can be written as a non-trivial convex combination of two other RCMs that satisfy the correlation bounds).  \end{proof}

\section{Other proofs omitted from main text}   \label{app: other}

\subsection{Proof of Proposition \ref{pro: uniqueness}}

Assume that $\mathbb{C}^\rho_P = 0$. We first show that any RNUM representation $\mu$ of $\rho$ has support contained in the set of 1-step away choice functions.

Let $\mu$ be an RNUM representation of $\rho$. Suppose by contradiction that $\mu(c^*) > 0$ for some 2-step away choice function $c^*$. Define
\[
\rho_{c^*}(a, A) = \textbf{1}\{c^*(A) = a\} \quad \text{and} \quad \hat{\rho}(a, A) = \frac{\rho(a, A) - \mu(c^*)\textbf{1}\{c^*(A) = a\}}{1 - \mu(c^*)}
\]
for all $a \in A$ and $A \subseteq X$. Then it is clear that $\rho$ is a convex combination of $\hat{\rho}$ and $c^*$, i.e.,
\[
\rho(a, A) = \mu(c^*)\textbf{1}\{c^*(A) = a\} + (1 - \mu(c^*))\hat{\rho}(a, A) \quad \forall a \in A, A \subseteq X.
\]

Next, note that $\mathbb{C}^{\rho_{c^*}}_P = [K - 3] - [K-2] < 0$. Further, since $\mathbb{C}^\rho_P = 0$ and since
\[
\mathbb{C}^\rho_P = \mu(c^*) \mathbb{C}^{\rho_{c^*}}_P + (1 - \mu(c^*)) \mathbb{C}^{\hat{\rho}}_P,
\]
we must have $\mathbb{C}^{\hat{\rho}}_P > 0$. But if $\mathbb{C}^{\hat{\rho}}_P > 0$, then any RCM representation of $\hat{\rho}$ puts positive probability on $P$. Since the RCM $\hat{\mu}$ defined by
\[
\hat{\mu}(c) = \frac{\mu(c)}{1 - \mu(c^*)} \quad \text{for all } c \in \mathcal{C} \setminus \{c^*\}, \quad \hat{\mu}(c^*) = 0
\]
represents $\hat{\rho}$, it follows that $\hat{\mu}(c_P) > 0$. Thus, it follows that $\mu(c_P) > 0$, a contradiction.

Next, we show that any representation of $\rho$ with RCMs 1-step away from $P$ is unique. Let $\mu, \mu'$ be 1-step away representations of $\rho$. Let $c \in \mathcal{C}$ with $\mu(c) > 0$. Then $c(A) \neq \max(A, P)$ for exactly one subset $A$ of $X$ with $A \neq \{x_P^{(N-1)}, x_P^{(N)}\}$. Note also that $c'(A) \neq c(A)$ for all $c'$ with $\mu(c') > 0$. Hence, $\rho(c(A), A) = \mu(c) > 0$. Since $\rho(c(A), A) > 0$ it follows that there is a 1-step away choice function $c'$ with $\mu'(c') > 0$ and $c'(A) = c(A)$, hence we must have $c' = c$. Since $\mu'(c) > 0$, similar reasoning implies that $\mu'(c) = \rho(c(A), A)$ and hence that $\mu'(c) = \mu(c)$.

\newpage

\subsection{Proof of Proposition \ref{pro: falsifiability}}

\begin{proof} Let $\mathtt{SCF}$ be the collection of all stochastic choice functions. Define a function $F: \mathtt{SCF} \to \mathbb{R}$ by $$F(\rho)= \min_{a \in A: A \subseteq X} \text{BM}_{\rho}(a,A).$$ Viewing $\mathtt{SCF}$ as a convex subset of $\mathbb{R}^d$ (for suitable $d$) it is straightforward to check that $F$ is a continuous function (to see this, first note that $\text{BM}_{\rho}(a,A)$ is continuous for fixed $a \in A$ and $A \subseteq X$, and since the minimum of a finite number of continuous functions is continuous it follows that $F$ is continuous). Let $\mathtt{M}^*$ denote the closure of $\mathtt{M}$. Since $\mathtt{M}^*$ is compact and $F$ is continuous there is a $M > 0$ such that $F(\rho) \geq - M$ for all $\rho \in \mathtt{M}^*$ (hence for all $\rho \in \mathtt{M}$). Since $\rho_{\mu}$ is a full support RUM there is an $\beta > 0$ such that  $\text{BM}_{\rho_{\mu}}(a,A) \geq \beta$ for all $a \in A$ and $A \subseteq X$.   Let $1> \bar{\alpha} >0$ be large enough  such that $\bar{\alpha} > \frac{M}{\beta + M} > 0$ (such a number clearly exists). 

\vspace{5mm}

\noindent Let $\alpha \geq \bar{\alpha}$ and $\rho \in \mathtt{M}$. We need to prove that $\rho^* = \alpha \rho_{\mu} + (1-  \alpha) \rho$ is a RUM. It suffices to show that the BM-polynomials of $\rho^*$ are non-negative. Let $a \in A$ and $A \subseteq X$. There are two cases. 

\vspace{5mm}

\noindent Case 1. Assume first that $\text{BM}_{\rho}(a,A) \geq 0$. By Lemma \ref{lemma: convex} we then have $$\text{BM}_{\rho^*}(a,A) = \alpha \text{BM}_{\rho_{\mu}}(a,A) + (1-\alpha) \text{BM}_{\rho}(a,A) \geq 0$$

\vspace{5mm}

\noindent Case 2. Assume next that $\text{BM}_{\rho}(a,A) < 0$. Then $$\text{BM}_{\rho^*}(a,A) = \alpha \text{BM}_{\rho_{\mu}}(a,A) + (1-\alpha) \text{BM}_{\rho}(a,A) \geq $$ $$  \bar{\alpha} \text{BM}_{\rho_{\mu}}(a,A) + (1-\bar{\alpha}) \text{BM}_{\rho}(a,A)  \geq $$ $$\bar{\alpha} \beta + (1-\bar{\alpha}) (-M)  = \bar{\alpha} ( \beta + M) - M >0.$$ The first (in)equality follows by Lemma \ref{lemma: convex} (i.e. convexity of the $\text{BM}$ operator).  The second (in)equality follows since $\alpha \geq \bar{\alpha}$ and $\text{BM}_{\rho_{\mu}}(a,A) > \text{BM}_{\rho}(a,A)$ and the third (in)equality follows since $\text{BM}_{\rho_{\mu}}(a,A) \geq \beta$ and since $\text{BM}_{\rho}(a,A) \geq F(\rho) \geq - M$.  \end{proof}

\newpage
\subsection{Proof of proposition \ref{pro: almost}}

\begin{proof} For each preference $P \in \mathcal{P}$ define a  dual RUM $\mu^d_P$ with binding correlation bound at $P$, i.e. so that

$$\mu^d_P(P) = \frac{K- 2 - n(P,P^*) }{K -1 -n(P,P^*)}=\frac{K-2}{K-1}.$$ Let $\rho_{P}$ be RUM with full support on $P$.  Then  $$\lVert \rho_{P}-\rho_{\mu^d_P} \rVert = \max_{a \in A} \max_{A \subseteq X: |A| \geq 2} \left|\rho_{P}(a,A)-\rho_{\mu^d_P}(a,A)\right|  = \mu^d_P(P^*) = \frac{1}{K-1}.$$ Let $\rho$ be RUM. Then there is a distribution $\mu$ such that $\rho=\rho_{\mu}$. Define $$\rho'= \sum_{P \in \mathcal{P}} \mu(P)\rho_{\mu^d_P}.$$   Then, by a final application of the triangle-inequality, it follows that  $$\left\lVert \rho - \rho' \right\rVert   =      \left\lVert \sum_{P \in \mathcal{P}}\mu(P)\left[\rho_P  - \rho_{\mu^d_P}\right] \right\rVert \leq $$ $$ \sum_{P \in \mathcal{P}}\mu(P) \left\lVert  \rho_P  - \rho_{\mu^d_P}\right\rVert    \leq \frac{1}{K-1}=\frac{1}{2^{|X|}-|X|-1}.$$ \end{proof}

\bibliographystyle{apa-good}

\bibliography{bibliography}

\begin{thebibliography}{46}
\expandafter\ifx\csname natexlab\endcsname\relax\def\natexlab#1{#1}\fi
\expandafter\ifx\csname url\endcsname\relax
  \def\url#1{{\tt #1}}\fi
\expandafter\ifx\csname urlprefix\endcsname\relax\def\urlprefix{URL }\fi

\bibitem[{Aguiar \& Kashaev(2021)}]{aguiar2021stochastic}
Aguiar, V.~H., \& Kashaev, N. (2021).
\newblock Stochastic revealed preferences with measurement error.
\newblock {\em The Review of Economic Studies\/}, {\em 88\/}(4), 2042--2093.

\bibitem[{Apesteguia \& Ballester(2021)}]{apesteguia2021separating}
Apesteguia, J., \& Ballester, M.~A. (2021).
\newblock Separating predicted randomness from residual behavior.
\newblock {\em Journal of the European Economic Association\/}, {\em 19\/}(2),
  1041--1076.

\bibitem[{Apesteguia et~al.(2017)Apesteguia, Ballester, \&
  Lu}]{apesteguia2017single}
Apesteguia, J., Ballester, M.~A., \& Lu, J. (2017).
\newblock Single-crossing random utility models.
\newblock {\em Econometrica\/}, {\em 85\/}(2), 661--674.

\bibitem[{Arrow(1959)}]{arrow1959rational}
Arrow, K.~J. (1959).
\newblock Rational choice functions and orderings.
\newblock {\em Economica\/}, {\em 26\/}(102), 121--127.

\bibitem[{Barber{\'a} \& Pattanaik(1986)}]{barbera1986falmagne}
Barber{\'a}, S., \& Pattanaik, P.~K. (1986).
\newblock Falmagne and the rationalizability of stochastic choices in terms of
  random orderings.
\newblock {\em Econometrica\/}, (pp. 707--715).

\bibitem[{Becker(1962)}]{becker}
Becker, G.~S. (1962).
\newblock Irrational behaviour and economic theory.
\newblock {\em Journal of Political Economy\/}, (70), 1--13.

\bibitem[{Block \& Marschak(1960)}]{block}
Block, H., \& Marschak, J. (1960).
\newblock Random orderings and stochastic theories of responses.
\newblock {\em Contributions to Probability and Statistics\/}, (Stanford
  University Press).

\bibitem[{Blundell et~al.(2003)Blundell, Browning, \&
  Crawford}]{blundell2003nonparametric}
Blundell, R.~W., Browning, M., \& Crawford, I.~A. (2003).
\newblock Nonparametric engel curves and revealed preference.
\newblock {\em Econometrica\/}, {\em 71\/}(1), 205--240.

\bibitem[{Cattaneo et~al.(2020)Cattaneo, Ma, Masatlioglu, \&
  Suleymanov}]{cattaneo2020random}
Cattaneo, M.~D., Ma, X., Masatlioglu, Y., \& Suleymanov, E. (2020).
\newblock A random attention model.
\newblock {\em Journal of Political Economy\/}, {\em 128\/}(7), 2796--2836.

\bibitem[{Cerreia-Vioglio et~al.(2021)Cerreia-Vioglio, Lindberg, Maccheroni,
  Marinacci, \& Rustichini}]{cerreia2021canon}
Cerreia-Vioglio, S., Lindberg, P.~O., Maccheroni, F., Marinacci, M., \&
  Rustichini, A. (2021).
\newblock A canon of probabilistic rationality.
\newblock {\em Journal of Economic Theory\/}, {\em 196\/}, 105289.

\bibitem[{Chambers et~al.(2025)Chambers, Masatlioglu, \&
  Yildiz}]{chambers2025ordered}
Chambers, C.~P., Masatlioglu, Y., \& Yildiz, K. (2025).
\newblock Ordered probabilistic choice.
\newblock {\em Working paper\/}.

\bibitem[{Dardanoni et~al.(2023)Dardanoni, Manzini, Mariotti, Petri, \&
  Tyson}]{dardanoni2023mixture}
Dardanoni, V., Manzini, P., Mariotti, M., Petri, H., \& Tyson, C.~J. (2023).
\newblock Mixture choice data: revealing preferences and cognition.
\newblock {\em Journal of Political Economy\/}, {\em 131\/}(3), 687--715.

\bibitem[{Dardanoni et~al.(2020)Dardanoni, Manzini, Mariotti, \& Tyson}]{DMMT}
Dardanoni, V., Manzini, P., Mariotti, M., \& Tyson, C. (2020).
\newblock Inferring cognitive heterogeneity from aggregate choices.
\newblock {\em Econometrica\/}, {\em 88\/}(3), 1269--1296.

\bibitem[{Deb et~al.(2023)Deb, Kitamura, Quah, \& Stoye}]{deb2023revealed}
Deb, R., Kitamura, Y., Quah, J.~K., \& Stoye, J. (2023).
\newblock Revealed price preference: theory and empirical analysis.
\newblock {\em The Review of Economic Studies\/}, {\em 90\/}(2), 707--743.

\bibitem[{Falmagne(1978)}]{falmagne1978representation}
Falmagne, J.-C. (1978).
\newblock A representation theorem for finite random scale systems.
\newblock {\em Journal of Mathematical Psychology\/}, {\em 18\/}(1), 52--72.

\bibitem[{Filiz-Ozbay \& Masatlioglu(2023)}]{filiz2023progressive}
Filiz-Ozbay, E., \& Masatlioglu, Y. (2023).
\newblock Progressive random choice.
\newblock {\em Journal of Political Economy\/}, {\em 131\/}(3), 716--750.

\bibitem[{Fiorini(2004)}]{fiorini2004short}
Fiorini, S. (2004).
\newblock A short proof of a theorem of falmagne.
\newblock {\em Journal of mathematical psychology\/}, {\em 48\/}(1), 80--82.

\bibitem[{Fishburn(1998)}]{fishburn1998}
Fishburn, P.~C. (1998).
\newblock Stochastic utility.
\newblock In {\em Handbook of Utility Theory\/}, (pp. 273--380). Kluwer
  Dordrecht.

\bibitem[{Fudenberg et~al.(2015)Fudenberg, Iijima, \& Strzalecki}]{fudenberg}
Fudenberg, D., Iijima, R., \& Strzalecki, T. (2015).
\newblock Stochastic choice and revealed perturbed utility.
\newblock {\em Econometrica\/}, {\em 83\/}(6), 2371--2409.

\bibitem[{Giarlotta et~al.(2022)Giarlotta, Petralia, \&
  Watson}]{giarlotta2022bounded}
Giarlotta, A., Petralia, A., \& Watson, S. (2022).
\newblock Bounded rationality is rare.
\newblock {\em Journal of Economic Theory\/}, {\em 204\/}, 105509.

\bibitem[{Giarlotta et~al.(2023)Giarlotta, Petralia, \&
  Watson}]{giarlotta2023context}
Giarlotta, A., Petralia, A., \& Watson, S. (2023).
\newblock Context-sensitive rationality: Choice by salience.
\newblock {\em Journal of Mathematical Economics\/}, {\em 109\/}, 102913.

\bibitem[{Gilboa(1990)}]{gilboa1990necessary}
Gilboa, I. (1990).
\newblock A necessary but insufficient condition for the stochastic binary
  choice problem.
\newblock {\em Journal of Mathematical Psychology\/}, {\em 34\/}(4), 371--392.

\bibitem[{Grandmont(1992)}]{grandmont1992transformations}
Grandmont, J.-M. (1992).
\newblock Transformations of the commodity space, behavioral heterogeneity, and
  the aggregation problem.
\newblock {\em Journal of Economic Theory\/}, {\em 57\/}(1), 1--35.

\bibitem[{Hoderlein \& Stoye(2014)}]{hoderlein2014revealed}
Hoderlein, S., \& Stoye, J. (2014).
\newblock Revealed preferences in a heterogeneous population.
\newblock {\em Review of Economics and Statistics\/}, {\em 96\/}(2), 197--213.

\bibitem[{Hoderlein \& Stoye(2015)}]{hoderlein2015testing}
Hoderlein, S., \& Stoye, J. (2015).
\newblock Testing stochastic rationality and predicting stochastic demand: the
  case of two goods.
\newblock {\em Economic Theory Bulletin\/}, {\em 3\/}(2), 313--328.

\bibitem[{Im \& Rehbeck(2022)}]{im2022non}
Im, C., \& Rehbeck, J. (2022).
\newblock Non-rationalizable individuals and stochastic rationalizability.
\newblock {\em Economics Letters\/}, {\em 219\/}, 110786.

\bibitem[{Kalai et~al.(2002)Kalai, Rubinstein, \& Spiegler}]{kalai}
Kalai, G., Rubinstein, A., \& Spiegler, R. (2002).
\newblock Rationalizing choice functions by multiple rationales.
\newblock {\em Econometrica\/}, {\em 70\/}(6), 2481--2488.

\bibitem[{Kashaev \& Aguiar(2022)}]{kashaev2022random}
Kashaev, N., \& Aguiar, V.~H. (2022).
\newblock A random attention and utility model.
\newblock {\em Journal of Economic Theory\/}, {\em 204\/}, 105487.

\bibitem[{Kitamura \& Stoye(2018)}]{kitamura2018nonparametric}
Kitamura, Y., \& Stoye, J. (2018).
\newblock Nonparametric analysis of random utility models.
\newblock {\em Econometrica\/}, {\em 86\/}(6), 1883--1909.

\bibitem[{Luce(1959)}]{luce59}
Luce, D.~R. (1959).
\newblock Individual choice behavior: a theoretical analysis.

\bibitem[{Luce \& Raiffa(1957)}]{luce1957introduction}
Luce, R.~D., \& Raiffa, H. (1957).
\newblock {\em Games and decisions: Introduction and critical survey\/}.

\bibitem[{Manzini \& Mariotti(2018)}]{mariottidual}
Manzini, P., \& Mariotti, M. (2018).
\newblock Dual random utility maximisation.
\newblock {\em Journal of Economic Theory\/}, {\em 177\/}, 162--182.

\bibitem[{Masatlioglu et~al.(2012)Masatlioglu, Nakajima, \& Ozbay}]{masa2012}
Masatlioglu, Y., Nakajima, D., \& Ozbay, E.~Y. (2012).
\newblock Revealed attention.
\newblock {\em American Economic Review\/}, {\em 102\/}(5), 2183--2205.

\bibitem[{Matzkin(2007)}]{matzkin2007heterogeneous}
Matzkin, R.~L. (2007).
\newblock Heterogeneous choice.
\newblock {\em Econometric Society Monographs\/}, {\em 43\/}, 75.

\bibitem[{McCausland et~al.(2020)McCausland, Davis-Stober, Marley, Park, \&
  Brown}]{mccausland2020testing}
McCausland, W.~J., Davis-Stober, C., Marley, A.~A., Park, S., \& Brown, N.
  (2020).
\newblock Testing the random utility hypothesis directly.
\newblock {\em The Economic Journal\/}, {\em 130\/}(625), 183--207.

\bibitem[{McFadden \& Richter(1990)}]{mcfadden1990stochastic}
McFadden, D., \& Richter, M.~K. (1990).
\newblock Stochastic rationality and revealed stochastic preference.
\newblock {\em Preferences, Uncertainty, and Optimality, Essays in Honor of Leo
  Hurwicz, Westview Press: Boulder, CO\/}, (pp. 161--186).

\bibitem[{McFadden(2006)}]{mcfadden2006revealed}
McFadden, D.~L. (2006).
\newblock Revealed stochastic preference: a synthesis.
\newblock In {\em Rationality and Equilibrium: A Symposium in Honor of Marcel
  K. Richter\/}, (pp. 1--20). Springer.

\bibitem[{Monderer(1992)}]{monderer1992stochastic}
Monderer, D. (1992).
\newblock The stochastic choice problem: A game-theoretic approach.
\newblock {\em Journal of Mathematical Psychology\/}, {\em 36\/}(4), 547--554.

\bibitem[{Petri(2023)}]{petri23}
Petri, H. (2023).
\newblock Random (ordered) multivalued choice.
\newblock {\em Working paper\/}.

\bibitem[{Rockafellar(1970)}]{Rock70}
Rockafellar, R.~T. (1970).
\newblock {\em Convex Analysis\/}.
\newblock Princeton: Princeton University Press.
\newline\urlprefix\url{https://doi.org/10.1515/9781400873173}

\bibitem[{Sen(1971)}]{sen71}
Sen, A. (1971).
\newblock Choice functions and revealed preference.
\newblock {\em The Review of Economic Studies\/}, {\em 38\/}(3), 307--317.

\bibitem[{Smeulders et~al.(2021)Smeulders, Cherchye, \&
  De~Rock}]{smeulders2021nonparametric}
Smeulders, B., Cherchye, L., \& De~Rock, B. (2021).
\newblock Nonparametric analysis of random utility models: computational tools
  for statistical testing.
\newblock {\em Econometrica\/}, {\em 89\/}(1), 437--455.

\bibitem[{Stoye(2019)}]{stoye2019revealed}
Stoye, J. (2019).
\newblock Revealed stochastic preference: A one-paragraph proof and
  generalization.
\newblock {\em Economics Letters\/}, {\em 177\/}, 66--68.

\bibitem[{Suleymanov(2024)}]{suleymanov2024branching}
Suleymanov, E. (2024).
\newblock Branching-independent random utility model.
\newblock {\em Journal of Economic Theory\/}, (p. 105880).

\bibitem[{Turansick(2022)}]{turansick}
Turansick, C. (2022).
\newblock Identification in the random utility model.
\newblock {\em Journal of Economic Theory\/}, {\em 203\/}.

\bibitem[{Tversky(1972)}]{tversky1972choice}
Tversky, A. (1972).
\newblock Choice by elimination.
\newblock {\em Journal of mathematical psychology\/}, {\em 9\/}(4), 341--367.

\end{thebibliography}

\end{document}